\newtheorem{theorem}{Theorem}
\newtheorem{lemma}[theorem]{Lemma}
\newtheorem{proposition}[theorem]{Proposition}
\newtheorem{definition}{Definition}
\newcommand{\powerset}[1]{2^{#1}}
\newcommand{\naturalNumbers}{\mathbb{N}}
\newcommand{\setOfSatisfactionDegrees}{\overline{\naturalNumbers}}
\newcommand{\replace}[2]{[{#1}/{#2}]}
\newcommand{\sizeof}[1]{|{#1}|}
\newcommand{\bigO}{\mathcal{O}}
\renewcommand{\max}{\mathit{max}}
\renewcommand{\min}{\mathit{min}}
\newcommand{\pl}{\textnormal{PL}} %
\newcommand{\qcl}{\textnormal{QCL}} 
\newcommand{\pqcl}{\textnormal{PQCL}}
\newcommand{\qclPlus}{\textnormal{QCL+}}  
\newcommand{\ccl}{\textnormal{CCL}} 
\newcommand{\qccl}{\textnormal{QCCL}} %
\newcommand{\sccl}{\textnormal{SCCL}}
\newcommand{\lcl}{\textnormal{LCL}}
\newcommand{\cl}{\mathcal{L}}
\newcommand{\universe}{\mathcal{U}}
\newcommand{\connectives}[1]{\mathcal{C}_{#1}}
\newcommand{\formulas}[1]{\mathcal{F}_{#1}}
\newcommand{\variablesOf}[1]{var({#1})}
\newcommand{\generalConn}{\circ}
\newcommand{\qclConn}{\vv{\times}}
\newcommand{\cclConn}{\vv{\odot}}
\newcommand{\scclConn}{\vv{\circledast}}
\newcommand{\lclConn}{\vv{\diamond}}
\newcommand{\I}{\mathcal{I}}
\newcommand{\J}{\mathcal{J}}
\newcommand{\opt}[1]{\mathit{opt}_{#1}}
\newcommand{\degree}[1]{\mathit{deg}_{#1}}
\newcommand{\optConn}[2]{\opt{{#1}}^{#2}}
\newcommand{\degreeConn}[2]{\degree{{#1}}^{#2}}
\newcommand{\sat}[2]{\models^{#1}_{#2}}
\newcommand{\prefModels}[2]{\mathit{Prf}_{#1}({#2})}
\newcommand{\obtainableDegrees}[1]{\mathcal{D}_{#1}}
\newcommand{\degreeEquiv}[1]{\equiv^{#1}_{d}}
\newcommand{\fullEquiv}[1]{\equiv^{#1}_{f}}
\newcommand{\strongEquiv}[1]{\equiv^{#1}_{s}}
\newcommand{\ccfont}[1]{\protect\mathsf{#1}}
\newcommand{\PolyTime}{\ccfont{P}}
\newcommand{\NP}{\ccfont{NP}}
\newcommand{\coNP}{\ccfont{coNP}}
\newcommand{\phDelta}[1]{\Delta_{#1}\PolyTime}
\newcommand{\phTheta}[1]{\Theta_{#1}\PolyTime}
\newcommand{\SAT}{\textsc{Sat}}
\newcommand{\Unsat}{\textsc{Unsat}}
\newcommand{\LexMaxSat}{\textsc{LexMaxSat}}
\newcommand{\LogLexMaxSat}{\textsc{LogLexMaxSat}}
\newcommand{\clModelChecking}[1]{{#1}-\textsc{DegreeChecking}}
\newcommand{\clSAT}[1]{{#1}-\textsc{DegreeSat}}
\newcommand{\clPrefModelChecking}[1]{{#1}-\textsc{PrefModelChecking}}
\newcommand{\clPrefModelSAT}[1]{{#1}-\textsc{PrefModelSat}}
\newcommand{\clDegreeEquivChecking}[1]{{#1}-\textsc{DegreeEquivalence}}
\newcommand{\clFullEquivChecking}[1]{{#1}-\textsc{FullEquivalence}}
\newcommand{\clStrongEquivChecking}[1]{{#1}-\textsc{StrongEquivalence}}
\newcommand{\problemtitle}[1]{\gdef\@problemtitle{#1}}%
\newcommand{\probleminput}[1]{\gdef\@probleminput{#1}}%
\newcommand{\problemquestion}[1]{\gdef\@problemquestion{#1}}%
	\par\addvspace{.4\baselineskip}
	\par\addvspace{.4\baselineskip}
\title{Choice Logics and Their \\ Computational Properties}
\author{
	Michael Bernreiter,
	Jan Maly,
	Stefan Woltran
	\\
	Institute of Logic and Computation, TU Wien, Austria
	\\
	\{mbernrei, jmaly, woltran\}@dbai.tuwien.ac.at
}
\date{}
\begin{document}
	
\maketitle

\begin{abstract}
	Qualitative Choice Logic ($\qcl$) and Conjunctive Choice Logic ($\ccl$) are formalisms for preference handling, with especially $\qcl$ being well established in the field of AI. 
	So far, analyses of these logics need to be done on a case-by-case basis, albeit 
	they share several common features. 
	This calls for a more general choice logic framework, 
	with $\qcl$ and $\ccl$ as well as some of their derivatives being
	particular instantiations. We provide such a framework, which allows us, on the one hand,
	to easily define new choice logics and, on the other hand,
	to examine properties of different choice logics in a uniform setting. In particular, 
	we investigate strong equivalence, a core concept in non-classical logics for
	understanding formula simplification, and computational complexity. 
	Our analysis %
	also yields new results for $\qcl$ and $\ccl$. For example, we show that the main reasoning task regarding preferred models is $\phTheta{2}$-complete for $\qcl$ and $\ccl$, while being $\phDelta{2}$-complete for a newly introduced choice logic.
\end{abstract}

\section{Introduction}

Representing preferences and reasoning about them 
is a key challenge in many areas of 
AI research. One of the most fruitful approaches to preference representation
has been the use of logic-based formalisms \citep{domshlak2011preferences,pigozzi2016preferences}.
Two closely related examples from the literature are Qualitative Choice Logic ($\qcl$)
\citep{brewka2004qualitative} and Conjunctive Choice Logic ($\ccl$)
\citep{boudjelida2016conjunctive}. 
Especially $\qcl$ has proven to be a useful preference formalism,
with applications ranging from logic programming \citep{brewka2004logic}
to alert correlation \citep{benferhat2008alert} to database querying \citep{lietard2014towards}. 
However, several key computational properties of $\qcl$ and $\ccl$ have not been studied yet.
This includes strong equivalence, a tool to understand formula simplification, and the computational complexity of main reasoning tasks.

Moreover, the two types of preferences expressed by
$\qcl$ and $\ccl$ are certainly not the only ones. 
$\qcl$ extends classical propositional logic with a non-classical connective $\qclConn$ called ordered disjunction. Intuitively, $F \qclConn G$ means that it is preferable to satisfy $F$ but, if that is not possible, then satisfying $G$ is also acceptable.
Similarly, $\ccl$ introduces ordered conjunction
($\cclConn$), where the intended meaning of $F \cclConn G$ is that it is most preferable to satisfy
both $F$~and~$G$, but satisfying only $F$ is also acceptable. More specifically,
interpretations ascribe a number, called satisfaction degree, to $\qcl$- and
$\ccl$-formulas. The preferred models of a formula are those models with the least degree.
Other natural preferences cannot be succinctly represented in $\qcl$ and $\ccl$. For example, one could think of a more fine-grained choice connective: given $F \generalConn G$, it would be best to satisfy both $F$~and~$G$, second best to satisfy only $F$, and third best to satisfy only $G$. Or one could desire a connective with the same basic behavior as $\qclConn$ in $\qcl$, but in which satisfaction degrees are handled in a different way. There is a multitude of interesting logics related to $\qcl$ and $\ccl$ that have yet to be defined, and which may very well prove to be as useful as $\qcl$. 

In this paper, we propose a general framework for choice logics that,
on the one hand, makes it easy to define new choice logics by specifying 
one or more choice connectives and, on the other hand, allows us to settle
open questions regarding the computational properties of $\qcl$ and $\ccl$ in a 
uniform way. 
In detail, our main contributions are as follows:

\begin{itemize}
	\item We formally define a framework that captures both $\qcl$ and $\ccl$, as well as infinitely many new related logics. 
	To showcase the versatility of our framework we explicitly introduce two such new logics called  Lexicographic Choice Logic ($\lcl$) and Simple Conjunctive Choice Logic ($\sccl$).
	\item We characterize strong equivalence via simpler equivalence notions for large classes of choice logics. This further enables us to analyze properties related to strong equivalence more easily, and also provides valuable insights into the nature of choice logics.
	\item We analyze the computational complexity of choice logics in detail.\footnote{
		The complexity of some decision problems pertaining to $\qcl$ was conjectured by \citet{lang2004logical} but never formally investigated.} 
	For example, we show that the complexity of the main decision problem regarding preferred models ranges from $\NP$- to $\phDelta{2}$-completeness with QCL and CCL being located in between ($\phTheta{2}$-complete). 
	The complexity of checking for strong equivalence follows from our characterization via simpler equivalence notions.
\end{itemize}

\paragraph{Related Work}
$\qcl$ and $\ccl$ are not the only logic-based preference representation 
formalisms.
Other prominent examples include the preference logics introduced by
\citet{wright1963logicofpreference} 
and \citet{benthem2009everythingelse};
for a more complete overview, see the surveys by 
\citet{domshlak2011preferences} and \citet{bienvenu2010preference}. Most of these %
formalisms differ from choice 
logics in that they only represent preferences, while $\qcl$ and $\ccl$ integrate the 
representation of truth and preference
in one formalism.
One exception is recent work by \citet{arxivLexicographicLogic}, which is conceptually closely related to our $\lcl$, but differs in that formulas are assigned lists of truth values instead of satisfaction degrees. 

From a technical standpoint, many non-monotonic logics are closely related to $\qcl$, as they are inherently connected to preferences \citep{DBLP:conf/ijcai/Shoham87}. This is particularly true for propositional circumscription and possibilistic logic \citep{brewka2004qualitative}.
However, unlike choice logics, these formalisms are not primarily designed to represent preferences and often rely on constructs outside of the logical language (circumscription policy, possibility distribution) to represent knowledge.
It is also worth noting that choice logics are technically very different
from traditional infinite-valued logics \citep{gottwald2001treatise} as, for example, choice logics use classical interpretations that make atoms either true or false.

There are also systems that are based on $\qcl$'s ordered disjunction, but do not fit into our framework. 
For instance, \citet{jiang2015logic}
introduce a modal logic that contains a binary connective with a similar meaning to 
ordered disjunction, while 
\citet{zhang2015representing} state that they took inspiration from $\qcl$ for their prioritized disjunction, which is used to reason about game strategies. 

The remainder of the paper is structured as follows: Section~\ref{sec:framework} contains the definition of our framework, examples for logics that can be defined within it, and a result on the expressiveness of choice logics. The notion of strong equivalence is examined in Section~\ref{sec:strongEquivalence}, and the complexity of choice logics is analyzed in Section~\ref{sec:complexity}. Finally, Section~\ref{sec:conclusion} contains a summary of our results and pointers to future work.

\section{Choice Logic Framework} \label{sec:framework}

We shall start with introducing our general framework, then show how both
existing and new logics can be defined in this framework, and finally give a 
synthesis result that holds for any choice logic of our framework.

In what follows, 
$\pl$ stands for classical propositional logic,
$\universe$ denotes the alphabet of propositional variables, 
and an interpretation $\I$ is defined as a set of propositional variables such that $a \in \I$ if and only if $a$ is set to true by $\I$. If $\I$ satisfies a classical formula $F$, we write $\I \models F$.

\subsection{Syntax and Semantics} \label{sec:syntaxAndSemantics}

A choice logic has two types of connectives: classical connectives (here we use $\neg$, $\land$, and $\lor$), and binary choice connectives, with which preferences can be expressed.%

\begin{definition}\label{def:clFormulas}
	The set of choice connectives $\connectives{\cl}$ of a choice logic $\cl$ is a finite set of symbols such that $\connectives{\cl} \cap \{\neg, \land, \lor\} = \emptyset$. The set $\formulas{\cl}$ of formulas of %
	$\cl$ is defined inductively as follows:
	\begin{enumerate}
		\item $a \in \formulas{\cl}$ for all $a \in \universe$;
		\item if $F \in \formulas{\cl}$, then $(\neg F) \in \formulas{\cl}$;
		\item if $F, G \in \formulas{\cl}$, then $(F \generalConn G) \in \formulas{\cl}$ for $\generalConn \in (\{\land, \lor\} \cup \connectives{\cl})$. 
	\end{enumerate}
	$\variablesOf{F}$ denotes the set of all variables in a formula $F\in\formulas{\cl}$. 
\end{definition}

For example, $\connectives{\qcl} = \{\qclConn\}$ and $\connectives{\pl} = \emptyset$. Formulas that do not contain a choice connective are simply classical formulas. 

The semantics of a choice logic is given by two functions, satisfaction degree and optionality.
The satisfaction degree of a formula given an interpretation is either a natural number or $\infty$.
The lower this degree, the more preferable the interpretation. The optionality of a formula
describes the maximum finite satisfaction degree that this formula can be ascribed.
As we will see in Section~\ref{sec:examples}, optionality is used to penalize less preferable
interpretations.

\begin{definition} \label{def:clOptionality}
	The optionality of a choice connective $\generalConn \in \connectives{\cl}$ in a choice logic $\cl$ is given by a function $\optConn{\cl}{\generalConn} \colon \naturalNumbers^2 \to \naturalNumbers$ such that $\optConn{\cl}{\generalConn}(k,\ell) \leq (k+1)\cdot(\ell+1)$ for all $k,\ell \in \naturalNumbers$. The optionality of an $\cl$-formula is given via %
	$\opt{\cl} \colon \formulas{\cl} \to \naturalNumbers$ with %
	\begin{enumerate}
		\item $\opt{\cl}(a) = 1$, for every $a \in \universe$;
		\item $\opt{\cl}(\neg F) = 1$;
		\item $\opt{\cl}(F \land G) = \max(\opt{\cl}(F),\opt{\cl}(G))$;
		\item $\opt{\cl}(F \lor G) = \max(\opt{\cl}(F),\opt{\cl}(G))$; 
		\item $\opt{\cl}(F \generalConn G) = \optConn{\cl}{\generalConn}(\opt{\cl}(F),\opt{\cl}(G))$ for every
		$\generalConn \in \connectives{\cl}$.
	\end{enumerate}
\end{definition}

The optionality of a classical formula is always $1$. For any choice connective~$\generalConn$, the optionality of $F \generalConn G$ is bounded such that $\opt{\cl}(F \generalConn G) \leq (\opt{\cl}(F)+1) \cdot (\opt{\cl}(G)+1)$.
The reason for this is that there are $\opt{\cl}(F)$ many finite degrees that could be ascribed to $F$, plus the infinite degree $\infty$. Likewise for $G$. Thus, there are at most $(\opt{\cl}(F)+1) \cdot (\opt{\cl}(G)+1)$ possibilities when combining the degrees of $F$ and $G$.

Next, we define the satisfaction degree of a formula. In the following, we write $\setOfSatisfactionDegrees$ for $(\naturalNumbers \cup \{\infty\})$. 

\begin{definition}\label{def:clSatisfactionDegree}
	The satisfaction degree of a choice connective $\generalConn \in \connectives{\cl}$ in a choice logic $\cl$ is given by a function ${\degreeConn{\cl}{\generalConn} \colon \naturalNumbers^{2} \times {\setOfSatisfactionDegrees}^{2} \to \setOfSatisfactionDegrees}$ where either $\degreeConn{\cl}{\generalConn}(k,\ell,m,n) \leq \optConn{\cl}{\generalConn}(k,\ell)$ or $\degreeConn{\cl}{\generalConn}(k,\ell,m,n) = \infty$ holds for all $k,\ell \in \naturalNumbers$ and all $m,n \in \setOfSatisfactionDegrees$. The satisfaction degree of an $\cl$-formula under an interpretation is given via $\degree{\cl} \colon \powerset{\universe} \times \formulas{\cl} \to \setOfSatisfactionDegrees$ with
	\begin{enumerate}
		\item $\degree{\cl}(\I,a) = 
		\begin{cases}
		1 & \text{if } a \in \I, \\
		\infty & \text{otherwise},
		\end{cases}$ \hspace{0.2cm} 
		for every $a \in \universe$;
		\item $\degree{\cl}(\I,\neg F) = 
		\begin{cases}
		1 & \text{if }  \degree{\cl}(\I,F) = \infty, \\
		\infty & \text{otherwise};
		\end{cases}$
		\item $\degree{\cl}(\I,F \land G) = \max(\degree{\cl}(\I,F),\degree{\cl}(\I,G))$;
		\item $\degree{\cl}(\I,F \lor G) = \min(\degree{\cl}(\I,F),\degree{\cl}(\I,G))$;
		\item $\degree{\cl}(\I,F \generalConn G) =  \degreeConn{\cl}{\generalConn}(\opt{\cl}(F), \opt{\cl}(G), \degree{\cl}(\I,F), \degree{\cl}(\I,G))$ for every $\generalConn \in \connectives{\cl}$.
	\end{enumerate}
\end{definition}

We also write $\I \sat{\cl}{m} F$ for $\degree{\cl}(\I,F) = m$. If $m < \infty$, we say that $\I$ satisfies $F$ (to a finite degree), and if $m = \infty$, then $\I$ does not satisfy $F$. If $F$ is a classical formula, then ${\I \sat{\cl}{1} F \iff \I \models F}$ and ${\I \sat{\cl}{\infty} F \iff \I \not\models F}$. The symbols $\top$ and $\bot$ are shorthand for the formulas $(a \lor \neg a)$ and $(a \land \neg a)$, where $a$ can be any variable. %
We have $\opt{\cl}(\top) = \opt{\cl}(\bot) = 1$, $\degree{\cl}(\I,\top) = 1$ and $\degree{\cl}(\I,\bot) = \infty$ %
for any interpretation $\I$ in every choice logic.

The semantics of the classical connectives are fixed and are the same as for QCL and CCL.
$F \land G$ is assigned the maximum degree of $F$ and $G$ because both formulas need to be satisfied. Conversely, we use the minimum degree for $F \lor G$ since satisfying either option suffices, and we do not need to concern ourselves with the less preferable option. Observe that it is still necessary to define $\opt{\cl}(F \lor G) = \max(\opt{\cl}(F),\opt{\cl}(G))$, as the case that either option is not satisfied has to be allowed for. As for negation, note that $\neg F$ can only assume the degrees $1$ or $\infty$.
Therefore, $\neg$ can be seen as classical negation when applied to a classical formula, and
as a tool to neutralize satisfaction degrees.
In order to define a form of negation that results in different degrees of satisfaction, 
we would need to keep track of degrees of dissatisfaction.
We observe that the semantics of the classical connectives used here are not the only possible ones. 
See also the brief discussion on $\pqcl$ and $\qclPlus$ in Section~\ref{sec:examples}.

From Definitions~\ref{def:clOptionality} and~\ref{def:clSatisfactionDegree} it follows that the satisfaction degree of a choice logic formula is bounded by its optionality as intended: 
\begin{lemma} \label{lemma:degreeIsBoundedByOptionality}
	Let $\cl$ be a choice logic. Then for all interpretations $\I$ and all $\cl$-formulas $F$, either $\degree{\cl}(\I,F) \leq \opt{\cl}(F)$ or $\degree{\cl}(\I,F) = \infty$.
\end{lemma}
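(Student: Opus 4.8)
The plan is to proceed by structural induction on the formula $F$, following the inductive definition of $\formulas{\cl}$ in Definition~\ref{def:clFormulas}. The induction hypothesis is that the claimed dichotomy---either $\degree{\cl}(\I,F) \leq \opt{\cl}(F)$ or $\degree{\cl}(\I,F) = \infty$---holds for all proper subformulas of $F$, for the fixed interpretation $\I$. Since optionality and satisfaction degree are both defined by recursion on formula structure, this induction aligns cleanly with the cases of Definitions~\ref{def:clOptionality} and~\ref{def:clSatisfactionDegree}.

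For the base case $F = a$ with $a \in \universe$, Definition~\ref{def:clSatisfactionDegree} gives $\degree{\cl}(\I,a) \in \{1, \infty\}$ while $\opt{\cl}(a) = 1$; hence either $\degree{\cl}(\I,a) = 1 \leq \opt{\cl}(a)$ or $\degree{\cl}(\I,a) = \infty$. The case $F = \neg G$ is handled identically, since $\degree{\cl}(\I,\neg G) \in \{1,\infty\}$ and $\opt{\cl}(\neg G) = 1$ by definition; in particular, neither of these two cases relies on the induction hypothesis.

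The two classical binary connectives are where the induction hypothesis enters. For $F = G \land H$, I would use $\degree{\cl}(\I, G \land H) = \max(\degree{\cl}(\I,G), \degree{\cl}(\I,H))$ and $\opt{\cl}(G \land H) = \max(\opt{\cl}(G), \opt{\cl}(H))$: if either subformula degree equals $\infty$ then so does their maximum, and otherwise both are finite and bounded by the respective optionalities via the hypothesis, so their maximum is bounded by $\max(\opt{\cl}(G), \opt{\cl}(H))$. The disjunction case $F = G \lor H$ needs slightly more care, as it combines $\min$ on the degree side with $\max$ on the optionality side: if both subformula degrees are $\infty$ the minimum is $\infty$, and otherwise at least one subformula, say $G$, has finite degree, whence $\degree{\cl}(\I, G \lor H) = \min(\degree{\cl}(\I,G),\degree{\cl}(\I,H)) \leq \degree{\cl}(\I,G) \leq \opt{\cl}(G) \leq \max(\opt{\cl}(G),\opt{\cl}(H))$.

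Finally, the choice-connective case $F = G \generalConn H$ with $\generalConn \in \connectives{\cl}$ is in fact immediate and does not even require the induction hypothesis: plugging $k = \opt{\cl}(G)$, $\ell = \opt{\cl}(H)$, $m = \degree{\cl}(\I,G)$, $n = \degree{\cl}(\I,H)$ into the defining property of $\degreeConn{\cl}{\generalConn}$ from Definition~\ref{def:clSatisfactionDegree} yields that $\degree{\cl}(\I, G \generalConn H) = \degreeConn{\cl}{\generalConn}(k,\ell,m,n)$ is either at most $\optConn{\cl}{\generalConn}(k,\ell) = \opt{\cl}(G \generalConn H)$ or equal to $\infty$, which is exactly the claim. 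I expect the only genuinely non-routine step to be the disjunction case, and even there the difficulty is mere bookkeeping around the $\min$/$\max$ mismatch rather than a real obstacle; the lemma is essentially a direct consequence of having encoded the required bound into the constraints on $\degreeConn{\cl}{\generalConn}$ in the first place.
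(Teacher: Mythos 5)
Your proof is correct and follows essentially the same route as the paper's: structural induction on $F$, with the base case and negation handled by the $\{1,\infty\}$ range, conjunction and disjunction via the $\max$/$\min$ semantics, and the choice-connective case falling out directly from the constraint imposed on $\degreeConn{\cl}{\generalConn}$ in Definition~\ref{def:clSatisfactionDegree}. If anything, you are slightly more explicit than the paper on the disjunction case (which the paper dismisses as ``analogous'' to conjunction despite the $\min$/$\max$ asymmetry), but this is a matter of detail, not of approach.
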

\begin{proof}
	By structural induction. Let $\cl$ be a choice logic and $\I$ an interpretation. 
	
	Base case: if $a \in \universe$, then either $\degree{\cl}(\I, a) = 1 = \opt{\cl}(a)$ or $\degree{\cl}(\I, a) = \infty$.
	
	Step case: let $F$ and $G$ be $\cl$-formulas with $\opt{\cl}(F) = k$, $\opt{\cl}(G) = \ell$, $\degree{\cl}(\I,F) = m$, and $\degree{\cl}(\I,G) = n$. By the I.H., either $m \leq k$ or $m = \infty$. Likewise, $n \leq \ell$ or $n = \infty$. The case for $\neg F$ is analogous to the base case. For $F \land G$, either $\degree{\cl}(\I, F \land G) = \max(m,n) \leq max(k,\ell) = \opt{\cl}(F \land G)$ or $\degree{\cl}(\I, F \land G) = \infty$. Analogous for $F \lor G$. For any choice connective $\generalConn \in \connectives{\cl}$, we have that ${\opt{\cl}(F \generalConn G) = \optConn{\cl}{\generalConn}(k, \ell)}$ and $\degree{\cl}(\I,F \generalConn G) = \degreeConn{\cl}{\generalConn}(k, \ell, m, n)$. By Definition~\ref{def:clSatisfactionDegree}, either $\degreeConn{\cl}{\generalConn}(k, \ell, m, n) \leq \optConn{\cl}{\generalConn}(k, \ell)$ or $\degreeConn{\cl}{\generalConn}(k, \ell, m, n) = \infty$. 
\end{proof}

Moreover, note that only those variables that actually occur in a formula $F$ can influence $\opt{\cl}(F)$ and $\degree{\cl}(\I,F)$, which means that we can assume $\I \subseteq \variablesOf{F}$ for any interpretation $\I$ that we are dealing with:

\begin{lemma} \label{lemma:nonOccuringAtomsHaveNoInfluence}
	Let $\cl$ be a choice logic, $\I$ an interpretation, and $F$ an $\cl$-formula. Let $\J = \I \cap \variablesOf{F}$. Then $\degree{\cl}(\I,F) = \degree{\cl}(\J,F)$.
\end{lemma}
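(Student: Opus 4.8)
The plan is to prove the statement by structural induction on $F$, following the same case split as Definition~\ref{def:clSatisfactionDegree}. The only point requiring care is that the induction hypothesis naturally speaks about the restriction of an interpretation to the variables of a \emph{subformula}, which is not literally $\J$. To remove this friction I would first prove a slightly stronger statement: for every $\cl$-formula $F$ and all interpretations $\I_1, \I_2$, if $\I_1 \cap \variablesOf{F} = \I_2 \cap \variablesOf{F}$, then $\degree{\cl}(\I_1, F) = \degree{\cl}(\I_2, F)$. The lemma is then the special case $\I_1 = \I$, $\I_2 = \J$, since $\J \cap \variablesOf{F} = (\I \cap \variablesOf{F}) \cap \variablesOf{F} = \I \cap \variablesOf{F}$.

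For the base case $F = a$ with $a \in \universe$ we have $\variablesOf{F} = \{a\}$, so the assumption $\I_1 \cap \{a\} = \I_2 \cap \{a\}$ says exactly that $a \in \I_1$ if and only if $a \in \I_2$; hence $\degree{\cl}(\I_1, a) = \degree{\cl}(\I_2, a)$ by clause~1 of Definition~\ref{def:clSatisfactionDegree}. This already settles the lemma for atoms.

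For the inductive step I would exploit that each immediate subformula $G$ of $F$ satisfies $\variablesOf{G} \subseteq \variablesOf{F}$, so intersecting the assumption $\I_1 \cap \variablesOf{F} = \I_2 \cap \variablesOf{F}$ with $\variablesOf{G}$ yields $\I_1 \cap \variablesOf{G} = \I_2 \cap \variablesOf{G}$, and the induction hypothesis applies directly to give $\degree{\cl}(\I_1, G) = \degree{\cl}(\I_2, G)$. The degree of $F$ under any interpretation is, by Definition~\ref{def:clSatisfactionDegree}, a function solely of the degrees of its immediate subformulas together with their optionalities, and the optionalities $\opt{\cl}(G)$ do not depend on the interpretation at all. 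Concretely, for $\neg G$ the degree depends only on $\degree{\cl}(\cdot, G)$; for $G \land H$ and $G \lor H$ it is the $\max$, respectively $\min$, of the subformula degrees; and for $G \generalConn H$ with $\generalConn \in \connectives{\cl}$ it is $\degreeConn{\cl}{\generalConn}(\opt{\cl}(G), \opt{\cl}(H), \degree{\cl}(\cdot, G), \degree{\cl}(\cdot, H))$. In each case the inputs coincide for $\I_1$ and $\I_2$, so the outputs coincide, closing the induction. There is no genuine obstacle beyond this bookkeeping: the strengthening to two arbitrary interpretations is precisely what lets the induction hypothesis be invoked for $\J$ rather than only for a restriction, which is the one place a naive formulation would stall.
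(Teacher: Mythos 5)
Your proof is correct and follows essentially the same idea as the paper, which dispatches the lemma in one sentence by appealing to the fact that the semantics of every connective is a function of the operands' degrees and optionalities; you simply carry out the structural induction that this remark implicitly invokes, with the sensible strengthening to two interpretations agreeing on $\variablesOf{F}$ so that the induction hypothesis applies cleanly to subformulas.
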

\begin{proof}
	This follows directly from the fact that the semantics of all connectives in a choice logic are given by functions over the optionalities and satisfaction degrees of their operands. %
\end{proof}

An interpretation that satisfies a formula to a finite degree will be referred to as a \emph{model} of that formula. But often we are interested only in the most preferable models.

\begin{definition}
	Let $\cl$ be a choice logic. Then an interpretation $\I$ is a preferred model of an $\cl$-formula $F$, written as $\I \in \prefModels{\cl}{F}$, if $\degree{\cl}(\I,F) \neq \infty$ and, for all other interpretations $\J$, $\degree{\cl}(\I,F) \leq \degree{\cl}(\J,F)$.
\end{definition}

By the above definition, $\prefModels{\cl}{F} = \emptyset$ if and only if $\degree{\cl}(\I,F) = \infty$ for all interpretations $\I$.

\subsection{Examples of Choice Logics} \label{sec:examples}

To define a choice logic $\cl$, it suffices to specify the choice connectives $\connectives{\cl}$ of that logic, and to provide the optionality- and satisfaction degree functions of every ${\generalConn \in \connectives{\cl}}$. The simplest example for a choice logic is classical propositional logic, where $\connectives{\pl} = \emptyset$. Since all $\pl$-formulas are classical formulas, ${\I \in \prefModels{\pl}{F} \iff \I \models F}$. 

$\qcl$ can be expressed in our framework as follows: 

\begin{definition} \label{def:qcl}
	$\qcl$ is the choice logic such that $\connectives{\qcl} = \{\qclConn\}$, and, if $k = \opt{\qcl}(F)$, $\ell = \opt{\qcl}(G)$, $m = \degree{\qcl}(\I,F)$, and $n = \degree{\qcl}(\I,G)$ for some $F$, $G$, $\I$, then
	\begin{align*}
	\opt{\qcl}(F \qclConn G) & = \optConn{\qcl}{\qclConn}(k,\ell) = k + \ell, \text{ and} \\
	\degree{\qcl}(\I,F \qclConn G) & = \degreeConn{\qcl}{\qclConn}(k,\ell,m,n)  \\ & = 
	\begin{cases}
	m & \text{if } m < \infty; \\
	n + k & \text{if } m = \infty, n <  \infty; \\
	\infty & \text{otherwise}.
	\end{cases}
	\end{align*}
\end{definition}

Here, we can see how optionality is used to penalize non-satisfaction: if $F$ is satisfied, then $\degree{\qcl}(\I,F \qclConn G) = \degree{\qcl}(\I,F) \leq \opt{\cl}(F)$; %
if $F$ is not satisfied, but $G$ is, then $\degree{\qcl}(\I,F \qclConn G) = \degree{\qcl}(\I,G) + \opt{\cl}(F) > \opt{\cl}(F)$. Thus, any interpretation which satisfies $F$ is automatically more preferable than one that does not. 

The fourth column of Table~\ref{table:comparisonOfChoiceConnectives}
illustrates $\qcl$ semantics on a simple example. From there, we can infer that $\{a\}$ and $\{a,b\}$ are preferred models of $a \qclConn b$, while $\emptyset$ and $\{b\}$ are not.

We know that $\qclConn$ is associative from \citet{brewka2004qualitative}, which means that given arbitrary $\qcl$-formulas $A$, $B$, and $C$, the formulas $((A \qclConn B) \qclConn C)$ and $(A \qclConn (B \qclConn C))$ always have the same optionality and satisfaction degrees. We can therefore write $F_1 \qclConn F_2 \ldots \qclConn F_n$ to express that we prefer $F_1$ to $F_2$, $F_2$ to $F_3$, and so on. For %
variables $a_1,\ldots,a_n$ with $a_i \neq a_j$ for all $i \neq j$, we have that $\{a_i\} \sat{\qcl}{i} (a_1 \qclConn a_2 \ldots \qclConn a_n)$.

Let us briefly consider how $\qclConn$ interacts with the classical connectives, specifically conjunction: let $F = (a \qclConn b) \land (c \qclConn d)$. Then all preferred models of $F$ must contain both $a$ and $c$, since otherwise either $(a \qclConn b)$ or $(c \qclConn d)$, and therefore also $F$, are ascribed a degree greater than $1$.

Next, we formally define $\ccl$. However, our function for the satisfaction degree of $\cclConn$ differs from the one given by \citet{boudjelida2016conjunctive}. This is because the original definition of $\ccl$, although it can be expressed in our framework if desired, fails to capture the intended meaning of ordered conjunction.\footnote{In our understanding, under the semantics described by \citet[Definition~8]{boudjelida2016conjunctive}, the formula $a \cclConn b$ will always be ascribed a degree of either $\infty$ or $1$.}

\begin{definition} \label{def:cclAlternative}
	$\ccl$ is the choice logic such that $\connectives{\ccl} = \{\cclConn\}$, and, if $k = \opt{\ccl}(F)$, $\ell = \opt{\ccl}(G)$, $m = \degree{\ccl}(\I,F)$, and $n = \degree{\ccl}(\I,G)$ for some $F$, $G$, $\I$, then
	\begin{align*}
	\opt{\ccl}(F \cclConn G) & = k + \ell, \text{ and} \\
	\degree{\ccl}(\I,F \cclConn G) & =
	\begin{cases}
	n & \text{if } m = 1, n < \infty; \\ 
	m + \ell & \text{if } m < \infty \text{ and} \\ & (m > 1 \text{ or } n = \infty); \\
	\infty & \text{otherwise}.
	\end{cases}
	\end{align*}
\end{definition}

Intuitively, given $F \cclConn G$, it is best to satisfy both $F$ and $G$, while satisfying only $F$ is less preferable, but still acceptable. As intended, $\cclConn$ is associative under this new semantics for $\ccl$. This will be shown in Section~\ref{sec:strongEquivalence} (see Lemma~\ref{lemma:orderedDisjunctionIsAssociative}). For a series of distinct propositional variables, we have that $\{a_1, \ldots, a_{(n-i+1)}\} \sat{\ccl}{i} (a_1 \qclConn a_2 \ldots \qclConn a_n)$.

While $\qcl$ and $\ccl$ feature only a single choice connective, our framework now easily allows to define a choice logic 
with the choice connectives of both $\qcl$ and $\ccl$, i.e.\ $\connectives{\qccl} = \{\qclConn, \cclConn\}$, $\optConn{\qccl}{\qclConn} = \optConn{\qcl}{\qclConn}$, $\degreeConn{\qccl}{\qclConn} = \degreeConn{\qcl}{\qclConn}$, and likewise for $\ccl$ and $\cclConn$. 
In this way, a choice logic with several choice connectives can always be seen as the combination of other choice logics.

\begin{table}
	\centering
	\begin{tabular}{ccccccc}
		\toprule
		$\I$ & $a \land b$ & $a \lor b$ & $a \qclConn b$	& $a \cclConn b$	& $a \lclConn b$ & $a \scclConn b$ \\ 
		\midrule
		$\emptyset$ & $\infty$ & $\infty$ & $\infty$ & $\infty$ & $\infty$ & $\infty$ \\ 
		$\{b\}$ & $\infty$ & $1$ & $2$ & $\infty$ & $3$ & $\infty$ \\ 
		$\{a\}$ & $\infty$ & $1$ & $1$ & $2$ & $2$ & $2$ \\ 
		$\{a,b\}$ & $1$ & $1$ & $1$ & $1$ & $1$ & $1$ \\ 
		\bottomrule
	\end{tabular}
	\caption{The classical connectives $\land$, $\lor$ and the choice connectives $\protect\qclConn$ ($\qcl$), $\protect\cclConn$ ($\ccl$), $\protect\lclConn$ ($\lcl$), and $\protect\scclConn$ ($\sccl$) applied to atoms.}
	\label{table:comparisonOfChoiceConnectives}
\end{table}

We now introduce a new logic, called Lexicographic Choice Logic ($\lcl$), whose choice connective deals with satisfaction degrees in a more fine-grained manner.

\begin{definition} \label{def:lcl}
	$\lcl$ is the choice logic such that $\connectives{\lcl} = \{\lclConn\}$, and, if $k = \opt{\lcl}(F)$, $\ell = \opt{\lcl}(G)$, $m = \degree{\lcl}(\I,F)$, and $n = \degree{\lcl}(\I,G)$ for some $F$, $G$, $\I$, then
	\begin{align*}
	\opt{\lcl}(F \lclConn G) & = (k+1) \cdot (\ell+1) - 1, \text{ and} \\
	\degree{\lcl}(\I,F \lclConn G) & = 
	\begin{cases}
	(m-1) \cdot \ell + n & \text{if } m < \infty, n < \infty; \\
	k \cdot \ell + m & \text{if } m < \infty, n = \infty; \\
	k \cdot \ell + k + n & \text{if } m = \infty, n < \infty; \\
	\infty & \text{otherwise}. 
	\end{cases} 
	\end{align*}
\end{definition}

Given $F \lclConn G$ it is best to satisfy both $F$ and $G$, second best to satisfy only $F$, and third best to satisfy only $G$. Satisfying neither $F$ nor $G$ is not acceptable. Let $F = (a \lclConn (b \lclConn c))$. The only interpretation that ascribes a degree of $\infty$ to $F$ is $\emptyset$. The remaining $7$ interpretations applicable to $F$ each result in a different degree, ranging from $1$ to $7$. For example, $\{a,b,c\} \sat{\lcl}{1} F$, $\{a,b\} \sat{\lcl}{2} F$, $\{a,c\} \sat{\lcl}{3} F$, and so on until $\{c\} \sat{\lcl}{7} F$. In this way, $\lclConn$ enables us to succinctly encode lexicographic orderings over variables, which will be useful when analyzing the computational complexity of $\lcl$ in Section~\ref{sec:complexity}. 

It is also possible to define a choice logic that does not use optionality.
We call the following new logic Simple Conjunctive Choice Logic ($\sccl$).

\begin{definition} \label{def:sccl}
	$\sccl$ is the choice logic such that $\connectives{\sccl} = \{\scclConn\}$, and, if $k = \opt{\sccl}(F)$, $m = \degree{\sccl}(\I,F)$, and $n = \degree{\sccl}(\I,G)$ for some $F$, $G$, $\I$, then
	\begin{align*}
	\opt{\sccl}(F \scclConn G) & = k + 1, \text{ and} \\
	\degree{\sccl}(\I,F \scclConn G) & =
	\begin{cases}
	m & \text{if } m < \infty, n < \infty; \\
	m + 1 & \text{if } m < \infty, n = \infty; \\
	\infty & \text{otherwise}.
	\end{cases}
	\end{align*}
\end{definition}

$\sccl$ is similar to $\ccl$ in that the intended meaning of $F\scclConn G$ is the same as that of $F\cclConn G$, i.e., satisfying $F$ and $G$ is most preferable,
but satisfying only $F$ is also acceptable. However, $\sccl$ does not use optionality 
to penalize less preferable interpretations. Instead, the degree of such interpretations is
simply incremented by $1$. 

Consider $F = (a \scclConn (b \scclConn c))$. Then both $\{a,b\}$ and $\{a,b,c\}$ ascribe
a degree of~$1$ to $F$. The fact that $(b \scclConn c)$ is not optimally satisfied by
$\{a,b\}$ is irrelevant in this case. This is in contrast to $\ccl$,
where $(a \cclConn (b \cclConn c))$ would be satisfied to a degree of~$2$ by $\{a,b\}$.
Note that $\scclConn$ is not associative, since $\{a,b\}$ ascribes a degree of~$1$ to $F$ but a degree of~$2$ to $((a \scclConn b) \scclConn c)$.

Lastly, we want to discuss two variants of $\qcl$ introduced by \citet{benferhat2008two} called $\pqcl$ and $\qclPlus$. Both of these logics define $\qclConn$ in the same way as standard $\qcl$, but differ in how classical connectives deal with satisfaction degrees. The alternative conjunctions and disjunctions featured in $\pqcl$ and $\qclPlus$ can be implemented in our framework as choice connectives, if desired. In fact, some behave more like choice connectives than classical connectives, as, for example, the conjunction of $\pqcl$ favors interpretations that ascribe a lower satisfaction degree to the first operand rather than the second. The semantics of $\lclConn$ in $\lcl$ is built on this principle, and is actually an extension of $\land$ in $\pqcl$. The alternative definition for negation in $\pqcl$ and $\qclPlus$ is more involved, but can be simulated in our framework by restricting ourselves to formulas where negations appear only in front of atoms. This means that $\pqcl$ and $\qclPlus$ can be captured by our framework as fragments. 

\subsection{Expressiveness} \label{sec:expressiveness}

It can be shown that any 
logic defined in our framework is powerful
enough to express
arbitrary assignments of satisfaction degrees to interpretations 
by a forumula as long as the degrees are obtainable in the following sense:

\begin{definition}
	A degree $m\in \setOfSatisfactionDegrees$ is called 
	obtainable in a choice logic $\cl$ if there exists 
	an interpretation $\I$ and an $\cl$-formula $F$ such that $\degree{\cl}(\I, F) = m$.
	By $\obtainableDegrees{\cl}$ we denote the set of all degrees obtainable in a choice logic $\cl$. 
\end{definition}

For example, $\obtainableDegrees{\pl} = \{1,\infty\}$ and $\obtainableDegrees{\qcl} = \setOfSatisfactionDegrees$. As soon as a degree $m$ is obtainable, any interpretation can be assigned this degree via a suitable formula. 

\begin{lemma} \label{lemma:synthesisLight}
	Let $m \in \obtainableDegrees{\cl}$ for some choice logic $\cl$. 
	Then there is an $\cl$-formula $F$ such that $\degree{\cl}(\I, F) = m$ for every interpretation $\I$.
\end{lemma}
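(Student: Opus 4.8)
The plan is to exploit the fact, noted in the proof of Lemma~\ref{lemma:nonOccuringAtomsHaveNoInfluence}, that the semantics of every connective is a function of the optionalities and satisfaction degrees of its operands. Consequently, a ``degree-constant'' formula can be built by replacing each atom with an interpretation-independent subformula of the correct degree. Concretely, since $m \in \obtainableDegrees{\cl}$, I would fix a witness: an interpretation $\I_0$ and an $\cl$-formula $F_0$ with $\degree{\cl}(\I_0, F_0) = m$. Define $F$ to be the result of substituting, for every variable $a \in \variablesOf{F_0}$, the constant $\top$ if $a \in \I_0$ and the constant $\bot$ if $a \notin \I_0$. Recall that $\opt{\cl}(\top) = \opt{\cl}(\bot) = 1 = \opt{\cl}(a)$, while $\degree{\cl}(\J, \top) = 1$ and $\degree{\cl}(\J, \bot) = \infty$ for every interpretation $\J$.

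The heart of the argument is a structural induction proving, for every subformula $H$ of $F_0$ and its image $H'$ under this substitution, the two invariants \emph{(a)} $\opt{\cl}(H') = \opt{\cl}(H)$ and \emph{(b)} $\degree{\cl}(\J, H') = \degree{\cl}(\I_0, H)$ for all interpretations $\J$. In the base case $H = a$, the substitution is chosen precisely so that the constant $\top$ or $\bot$ reproduces the degree $\degree{\cl}(\I_0, a)$ (namely $1$ or $\infty$) uniformly in $\J$, and matches the optionality $1$. For the inductive step, each of $\neg$, $\land$, $\lor$ and every choice connective $\generalConn$ computes optionality and degree by applying $\optConn{\cl}{\generalConn}$ and $\degreeConn{\cl}{\generalConn}$ (respectively the fixed classical rules) to the operands' optionalities and satisfaction degrees; since the inductive hypothesis equates these on both sides, (a) and (b) propagate. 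Taking $H = F_0$ and $H' = F$ then yields $\degree{\cl}(\J, F) = \degree{\cl}(\I_0, F_0) = m$ for every $\J$, as required.

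The one point that must not be overlooked is that invariant (a) on optionality has to be carried through the induction \emph{jointly} with (b): the degree function $\degreeConn{\cl}{\generalConn}$ takes the operand optionalities as arguments, so without knowing that the substitution preserves optionality one could not conclude that a choice connective returns the same degree on the substituted formula. Since the substitution only ever replaces an atom (optionality $1$) by $\top$ or $\bot$ (also optionality $1$), this invariant holds at the leaves and is preserved by all connectives. I do not foresee any genuine difficulty beyond this bookkeeping; in particular, the cases $m = 1$ and $m = \infty$ need no separate treatment, as they are subsumed by the same construction (with $\top$, respectively $\bot$, in fact already serving as an explicit witness).
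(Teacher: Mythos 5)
Your proposal is correct and uses essentially the same construction as the paper: replace each variable of a witness formula by $\top$ or $\bot$ according to the witness interpretation, then argue by induction that degrees (and optionalities) are preserved. Your explicit tracking of the optionality invariant is a point the paper's proof leaves implicit, but the argument is the same.
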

\begin{proof}
	Let $\cl$ be a choice logic, and let $\I$ be an interpretation. Let $m \in \obtainableDegrees{\cl}$. Since $m$ is obtainable in $\cl$, there is a formula $G$ such that $\degree{\cl}(\J,G) = m$ for some interpretation $\J$. We obtain $F = T(G)$ by transforming $G$ as follows: 
	\begin{enumerate}
		\item $T(a) = 
		\begin{cases}
		\top & \text{if } a \in \J \\
		\bot & \text{otherwise}
		\end{cases}$
		\item $T(\neg A) = \neg T(A)$
		\item $T(A \generalConn B) = T(A) \generalConn T(B)$, where $\generalConn \in \{\land, \lor\} \cup \connectives{\cl}$.
	\end{enumerate}
	The above transformation takes $G$ and replaces every variable $a$ that is contained in $\J$ by $\top$. If $a$ is not contained in $\J$, then it will be replaced by $\bot$. It is easy to see that $\degree{\cl}(\I,F) = \degree{\cl}(\J,G) = m$ since
	\begin{align*}
	\degree{\cl}(\I,T(a)) 
	& = \begin{cases}
	\degree{\cl}(\I, \top) & \text{if } a \in \J \\
	\degree{\cl}(\I, \bot) & \text{otherwise}
	\end{cases}
	\\ & = \begin{cases}
	1 & \text{if } a \in \J \\
	\infty & \text{otherwise}
	\end{cases}
	\\ & = \degree{\cl}(\J,a). \qedhere
	\end{align*}
\end{proof}

With the above lemma we can tackle the issue of formula synthesis:

\begin{proposition} \label{prop:synthesis}
	Let $\cl$ be a choice logic. Let $V$ be a finite set of propositional variables, and let $s$ be a function $s \colon \powerset{V} \to \obtainableDegrees{\cl}$. Then there is an $\cl$-formula $F$ such that for every $\I \subseteq V$, $\degree{\cl}(\I, F) = s(\I)$.
\end{proposition}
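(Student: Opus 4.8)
The plan is to realise $F$ as a ``disjunction of minterms'', isolating each subset of $V$ and stamping it with the prescribed degree, which works precisely because $\land$ is interpreted as $\max$ and $\lor$ as $\min$ (Definition~\ref{def:clSatisfactionDegree}). First, for each $\I \subseteq V$ I would write down the classical minterm
\[
\phi_{\I} \;=\; \bigwedge_{a \in \I} a \;\land\; \bigwedge_{a \in V \setminus \I} \neg a ,
\]
adopting the convention that an empty conjunction is $\top$. Since $\phi_\I$ is classical, $\degree{\cl}(\J, \phi_\I) \in \{1, \infty\}$, and for $\J \subseteq V$ one has $\degree{\cl}(\J, \phi_\I) = 1$ exactly when $\J = \I$, and $\infty$ otherwise. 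Second, because each value $s(\I)$ lies in $\obtainableDegrees{\cl}$, Lemma~\ref{lemma:synthesisLight} supplies a ``constant'' formula $C_{\I}$ with $\degree{\cl}(\J, C_\I) = s(\I)$ for \emph{every} interpretation $\J$ (for the case $s(\I) = \infty$ one may simply take $C_\I = \bot$).

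I would then define
\[
F \;=\; \bigvee_{\I \subseteq V} \bigl( \phi_{\I} \land C_{\I} \bigr)
\]
and verify $\degree{\cl}(\J, F) = s(\J)$ for every $\J \subseteq V$ by direct computation. Fixing $\J$, the disjunct for $\I = \J$ evaluates to $\max\bigl(\degree{\cl}(\J,\phi_\J),\, s(\J)\bigr) = \max(1, s(\J)) = s(\J)$, whereas every disjunct with $\I \neq \J$ evaluates to $\max(\infty, s(\I)) = \infty$. Since the outer disjunction is the $\min$ over all disjuncts, it therefore equals $s(\J)$: if $s(\J)$ is finite it is the unique finite value present, and if $s(\J) = \infty$ all disjuncts are $\infty$.

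The construction is essentially routine once its shape is fixed, so I do not anticipate a genuine obstacle; the two points needing care are (i) the identity $\max(1, s(\I)) = s(\I)$, which relies on every finite satisfaction degree being at least $1$ (consistent with the base cases and with $\degree{\cl}(\I,\top)=1$), and (ii) the degenerate cases, namely $s(\I) = \infty$ (handled by $C_\I = \bot$) and $V = \emptyset$, where $\powerset{V} = \{\emptyset\}$, $\phi_\emptyset = \top$, and $F$ collapses to $C_\emptyset$. I would also remark that the iterated $\land$ and $\lor$ are unambiguous because $\max$ and $\min$ are associative, so the parenthesisation of the big conjunctions and disjunctions is irrelevant, and that by Lemma~\ref{lemma:nonOccuringAtomsHaveNoInfluence} the behaviour of $F$ on interpretations outside $\powerset{V}$ plays no role in the claim.
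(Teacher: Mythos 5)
Your proposal is correct and follows essentially the same route as the paper: a disjunction over all $\I \subseteq V$ of the classical minterm for $\I$ conjoined with a constant-degree formula obtained from Lemma~\ref{lemma:synthesisLight}, verified via the $\max$/$\min$ semantics of $\land$ and $\lor$. The extra remarks on degenerate cases and associativity are fine but do not change the argument.
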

\begin{proof}
	Because of Lemma~\ref{lemma:nonOccuringAtomsHaveNoInfluence}, we can assume that every interpretation we are dealing with is a subset of $V$. Let $G_\J$ be the classical formula that characterizes the interpretation $\J$, i.e. 
	\begin{align*}
	G_\J =  \big(
	\bigwedge_{a \in \J} a \big) \land \big(\bigwedge_{a \in (V \setminus \J)} \neg a 						
	\big).
	\end{align*}	 
	Observe that $\J \models G_\J$, but $\J' \not\models G_\J$ for all $\J' \neq \J$. From Lemma \ref{lemma:synthesisLight}, we know that for every $\J$, there is an $\cl$-formula $S_\J$ such that $\degree{\cl}(\J,S_\J) = s(\J)$. Furthermore, let 
	\begin{align*}
	F = \bigvee_{\J \subseteq V} (G_\J \land S_\J).
	\end{align*}
	Let $\I$ be an arbitrary interpretation, and let $C$ be an arbitrary clause in $F$, i.e.\ $C = (G_\J~\land~S_\J)$ for some $\J$. We distinguish two cases:
	\begin{enumerate}
		\item $\I = \J$. Then $\degree{\cl}(\I,G_\J ) = 1$ and $\degree{\cl}(\I,S_\J) = s(\I)$, which implies that $\degree{\cl}(\I,C) = s(\I)$.
		\item $\I \neq \J$. Then $\degree{\cl}(\I,G_\J ) = \infty$ and therefore $\degree{\cl}(\I,C) = \infty$.
	\end{enumerate}
	By construction, there is exactly one clause in $F$ such that $\I = \J$. Since this clause is satisfied with degree $s(\I)$ by $\I$, and all other clauses are ascribed a degree of $\infty$ by $\I$, we have that $\degree{\cl}(\I,F) = s(\I)$. 
\end{proof}

Observe that Proposition~\ref{prop:synthesis} also speaks about interpretations $\I$ that are no subsets of $V$: since $\variablesOf{F} \subseteq V$, $\degree{\cl}(\I,F) = \degree{\cl}(\I \cap V, F) = s(\I \cap V)$.

\section{Strong Equivalence} \label{sec:strongEquivalence}

We now investigate 
strong equivalence in the sense of replaceability \citep{faber2013abstract} of preferred models. $F\replace{A}{B}$ denotes the substitution of an occurrence of $A$ in $F$ by $B$. 

\begin{definition} \label{def:strongEquivalence}
	Two formulas $A$ and $B$ of a choice logic $\cl$ are strongly equivalent, written as $A \strongEquiv{\cl} B$, if $\prefModels{\cl}{F} = \prefModels{\cl}{F\replace{A}{B}}$ for all $\cl$-formulas $F$.
\end{definition}

Strong equivalence is a crucial notion towards simplification, 
thus a characterization that avoids
going through infinitely many formulas is needed.
We show that strong equivalence often can be decided via simpler equivalence notions.

\begin{definition} \label{def:degreeEquivalence}
	Two formulas $A$ and $B$ of a choice logic $\cl$ are degree-equivalent, written as $A \degreeEquiv{\cl} B$, if $\degree{\cl}(\I,A) = \degree{\cl}(\I,B)$ for all interpretations $\I$.
\end{definition}

\begin{lemma} \label{lemma:degreeEquivImpliesSamePrefModels}
	Let $\cl$ be a choice logic. If $A \degreeEquiv{\cl} B$, then $\prefModels{\cl}{A} = \prefModels{\cl}{B}$.
\end{lemma}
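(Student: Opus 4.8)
The plan is to prove both inclusions $\prefModels{\cl}{A} \subseteq \prefModels{\cl}{B}$ and $\prefModels{\cl}{B} \subseteq \prefModels{\cl}{A}$ by a direct unfolding of the definitions, exploiting the fact that the notion of a preferred model refers to the formula \emph{only} through its satisfaction degree function. The crucial observation is that both conditions defining $\I \in \prefModels{\cl}{F}$ — namely $\degree{\cl}(\I,F) \neq \infty$ (being a model) and $\degree{\cl}(\I,F) \leq \degree{\cl}(\J,F)$ for all $\J$ (being degree-minimal) — are statements purely about the values $\degree{\cl}(\cdot, F)$. Degree-equivalence guarantees that $A$ and $B$ induce the very same such values, so any statement expressible in terms of these values transfers from one to the other.

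Concretely, I would assume $A \degreeEquiv{\cl} B$, which by Definition~\ref{def:degreeEquivalence} means $\degree{\cl}(\I,A) = \degree{\cl}(\I,B)$ for every interpretation $\I$. Take an arbitrary $\I \in \prefModels{\cl}{A}$. Then $\degree{\cl}(\I,A) \neq \infty$, and by degree-equivalence $\degree{\cl}(\I,B) = \degree{\cl}(\I,A) \neq \infty$, so $\I$ is a model of $B$. Moreover, for every interpretation $\J$ we have $\degree{\cl}(\I,B) = \degree{\cl}(\I,A) \leq \degree{\cl}(\J,A) = \degree{\cl}(\J,B)$, where the inequality holds because $\I$ is a preferred model of $A$ and the two equalities hold by degree-equivalence applied to $\I$ and to $\J$ respectively. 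Hence $\I \in \prefModels{\cl}{B}$, establishing $\prefModels{\cl}{A} \subseteq \prefModels{\cl}{B}$. Since degree-equivalence is symmetric, the identical argument with the roles of $A$ and $B$ swapped yields the reverse inclusion, and the two sets coincide.

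I do not expect any genuine obstacle here: the statement is essentially an immediate consequence of the definitions, and the only thing to be careful about is keeping track of which equality is justified by degree-equivalence versus which inequality is justified by the preferred-model assumption. The symmetry of $\degreeEquiv{\cl}$ lets me avoid writing the second inclusion out in full.
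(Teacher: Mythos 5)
Your proof is correct and is essentially the same as the paper's: both arguments just unfold the definition of a preferred model and replace every occurrence of $\degree{\cl}(\cdot,A)$ by $\degree{\cl}(\cdot,B)$ using degree-equivalence, the paper writing it as a single chain of biconditionals where you split it into two inclusions. No gap.
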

\begin{proof}
	Assume $A \degreeEquiv{\cl} B$. Then
	\begin{align*}
	\I \in \prefModels{\cl}{A} 
	\iff & \degree{\cl}(\I,A) \neq \infty \text{ and } \degree{\cl}(\I,A) \leq \degree{\cl}(\J,A) \text{ for all } \J \\
	\iff & \degree{\cl}(\I,B) \neq \infty \text{ and } \degree{\cl}(\I,B) \leq \degree{\cl}(\J,B) \text{ for all } \J \\
	\iff & \I \in \prefModels{\cl}{B}. \qedhere
	\end{align*}
\end{proof}

The converse of Lemma~\ref{lemma:degreeEquivImpliesSamePrefModels} is not true: $a$ and $(a \qclConn b)$ have the same preferred models in $\qcl$, but $\{b\} \sat{\qcl}{\infty} a$, while $\{b\} \sat{\qcl}{2} (a \qclConn b)$. Since the satisfaction degree of a formula might also depend on optionality, the notion of degree-equivalence is not strong enough in many cases; we thus also consider the following notion, which actually appears under the name of strong equivalence by \citet{brewka2004qualitative}.

\begin{definition} \label{def:fullEquivalence}
	Two $\cl$-formulas $A$ and $B$ of a choice logic $\cl$ are fully equivalent, written as $A \fullEquiv{\cl} B$, if  $A \degreeEquiv{\cl} B$ and $\opt{\cl}(A) = \opt{\cl}(B)$.
\end{definition}

\begin{lemma} \label{lemma:fullEquivSubstitution}
	Let $\cl$ be a choice logic. Then $A \fullEquiv{\cl} B$ if and only if $F \fullEquiv{\cl} F\replace{A}{B}$ for all $\cl$-formulas $F$.
\end{lemma}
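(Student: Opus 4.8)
The plan is to prove the two directions separately: the reverse implication is a one-line instantiation, while the forward implication (that full equivalence is a congruence) carries all the content and is handled by structural induction.

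For the reverse direction, suppose $F \fullEquiv{\cl} F\replace{A}{B}$ holds for every $\cl$-formula $F$. Instantiating with $F := A$, the occurrence of $A$ being substituted is the whole formula, so $A\replace{A}{B} = B$, and hence $A \fullEquiv{\cl} B$ follows at once. For the forward direction I would assume $A \fullEquiv{\cl} B$, i.e.\ $\opt{\cl}(A) = \opt{\cl}(B)$ and $\degree{\cl}(\I,A) = \degree{\cl}(\I,B)$ for all $\I$, and proceed by structural induction on $F$, tracking the single occurrence of $A$ being replaced. The base case is when that occurrence is all of $F$ (so $F = A$ and $F\replace{A}{B} = B$), which is exactly the hypothesis; if $A$ does not occur in $F$ at all, then $F\replace{A}{B} = F$ and the claim is immediate.

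In the inductive step the occurrence lies strictly inside $F$, so $F$ has a top-level connective whose immediate subformulas contain the occurrence. If $F = \neg F'$, then $F\replace{A}{B} = \neg(F'\replace{A}{B})$, and the induction hypothesis gives $F' \fullEquiv{\cl} F'\replace{A}{B}$; since $\opt{\cl}(\neg\,\cdot) = 1$ always and $\degree{\cl}(\I,\neg F')$ depends only on $\degree{\cl}(\I,F')$, both optionality and degree are preserved. If $F = F' \generalConn F''$ with $\generalConn \in \{\land,\lor\}\cup\connectives{\cl}$ and the occurrence lies in, say, $F'$, then $F\replace{A}{B} = (F'\replace{A}{B}) \generalConn F''$, and the hypothesis yields $\opt{\cl}(F') = \opt{\cl}(F'\replace{A}{B})$ together with $\degree{\cl}(\I,F') = \degree{\cl}(\I,F'\replace{A}{B})$ for all $\I$. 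The key point, worth stating explicitly, is that by Definitions~\ref{def:clOptionality} and~\ref{def:clSatisfactionDegree} both $\opt{\cl}$ and $\degree{\cl}$ are defined compositionally: $\opt{\cl}(F' \generalConn F'')$ is a function of $\opt{\cl}(F')$ and $\opt{\cl}(F'')$ only, and $\degree{\cl}(\I, F' \generalConn F'')$ is a function of $\opt{\cl}(F')$, $\opt{\cl}(F'')$, $\degree{\cl}(\I,F')$, and $\degree{\cl}(\I,F'')$. Since the induction hypothesis preserves both the optionality and the degree of the modified operand while the other operand is untouched, identical arguments are fed into these functions, so $\opt{\cl}$ and $\degree{\cl}$ agree on $F$ and $F\replace{A}{B}$, establishing $F \fullEquiv{\cl} F\replace{A}{B}$.

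The main obstacle is conceptual rather than computational: one must recognise why degree-equivalence alone would not suffice and full equivalence is needed. The degree of a compound formula can depend on the \emph{optionality} of its subformulas (as in the $n+k$ branch of $\qclConn$), so an induction preserving only degrees would break at the choice-connective step. Bundling optionality together with degree-equivalence is exactly what makes the inductive invariant strong enough to pass through every connective uniformly, which is why $\fullEquiv{\cl}$ is the right notion here.
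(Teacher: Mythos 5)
Your proposal is correct and follows essentially the same route as the paper: the reverse direction by instantiating $F := A$, and the forward direction by structural induction relying on the fact that the optionality and degree functions of every connective are compositional in the optionalities and degrees of the immediate subformulas. You simply spell out the induction that the paper leaves as an easy exercise.
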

\begin{proof}
	Assume that $A \fullEquiv{\cl} B$. Then, since the optionality- and degree-functions of any connective in a choice logic are functions over the optionalities and satisfaction degrees of the immediate subformulas, the fact that $F \fullEquiv{\cl} F\replace{A}{B}$ holds for any $\cl$-formula $F$ can be shown easily by structural induction. 
	
	For the converse direction, assume that $A \not\fullEquiv{\cl} B$. Choose $F = A$. Then $F\replace{A}{B} = B$, and therefore $F \not\fullEquiv{\cl} F\replace{A}{B}$.  
\end{proof}

The above lemma states that two formulas are fully equivalent if and only if they 
can be substituted for each other without affecting satisfaction degree or optionality.
This is  even stronger than strong equivalence which only demands that substitution
does not affect preferred models.

\begin{proposition} \label{prop:equivalences}
	${A \fullEquiv{\cl} B} \implies {A \strongEquiv{\cl} B} \implies {A \degreeEquiv{\cl} B}$ for any choice logic $\cl$ and all $\cl$-formulas $A,B$.
\end{proposition}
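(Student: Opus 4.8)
The plan is to prove the two implications separately. The left one should fall out of the machinery already assembled, while the right one is the substantive part and I would argue it by contraposition, using the synthesis result of Proposition~\ref{prop:synthesis}.

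For $A \fullEquiv{\cl} B \implies A \strongEquiv{\cl} B$, I would simply chain earlier results. Assume $A \fullEquiv{\cl} B$. By Lemma~\ref{lemma:fullEquivSubstitution} we get $F \fullEquiv{\cl} F\replace{A}{B}$ for every $\cl$-formula $F$. Since full equivalence subsumes degree-equivalence by Definition~\ref{def:fullEquivalence}, this yields $F \degreeEquiv{\cl} F\replace{A}{B}$, and then Lemma~\ref{lemma:degreeEquivImpliesSamePrefModels} gives $\prefModels{\cl}{F} = \prefModels{\cl}{F\replace{A}{B}}$. As $F$ was arbitrary, this is exactly $A \strongEquiv{\cl} B$.

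For $A \strongEquiv{\cl} B \implies A \degreeEquiv{\cl} B$, I would prove the contrapositive. Suppose $A \not\degreeEquiv{\cl} B$, so some interpretation separates them; using the symmetry of $\strongEquiv{\cl}$ (if $A \strongEquiv{\cl} B$ then also $B \strongEquiv{\cl} A$, seen by taking the context with the other formula already substituted) I may assume $\degree{\cl}(\I, A) = m < n = \degree{\cl}(\I, B)$, where $n$ may be $\infty$. Let $V = \variablesOf{A} \cup \variablesOf{B}$; every formula contains a variable, so $V \neq \emptyset$ and there is at least one interpretation over $V$ other than $\I$, and by Lemma~\ref{lemma:nonOccuringAtomsHaveNoInfluence} I may take $\I \subseteq V$. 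The idea is to localize $A$'s influence to $\I$ and install competitor interpretations sitting strictly below $n$ but not below $m$. Concretely, let $G_\I$ be the characteristic formula of $\I$ over $V$ as in the proof of Proposition~\ref{prop:synthesis}, so that $\degree{\cl}(\J, G_\I \land A) = \max(1,m) = m$ if $\J = \I$ and $\infty$ otherwise. By Proposition~\ref{prop:synthesis} choose a formula $D$ over $V$ with $\degree{\cl}(\I, D) = n$ and $\degree{\cl}(\J, D) = m$ for all $\J \subseteq V$ with $\J \neq \I$; this is legitimate because $m$ and $n$ are obtainable ($m$ and any finite $n$ are achieved degrees, and $\infty = \degree{\cl}(\I,\bot)$). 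Taking $F = (G_\I \land A) \lor D$, the $\min$/$\max$ semantics give $\degree{\cl}(\J, F) = m$ for every $\J$, so $\I \in \prefModels{\cl}{F}$; whereas for $F\replace{A}{B} = (G_\I \land B) \lor D$ we get $\degree{\cl}(\I, F\replace{A}{B}) = \min(n,n) = n$ while every other interpretation still has degree $m < n$, so $\I \notin \prefModels{\cl}{F\replace{A}{B}}$. This contradicts $A \strongEquiv{\cl} B$.

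The main obstacle is precisely this second implication: preferred models reveal only the minimum degree across all interpretations, so a bare difference $m \neq n$ at a single $\I$ need not change the preferred models on its own (indeed $G_\I \land A$ and $G_\I \land B$ have identical preferred models when $m,n$ are both finite). The crux is therefore the appeal to Proposition~\ref{prop:synthesis} to manufacture the disjunct $D$ that pins the competing interpretations at exactly degree $m$, calibrating the comparison so the finite gap between $m$ and $n$ becomes visible at the level of preferred models. In carrying this out I would double-check the boundary computation $\degree{\cl}(\I, G_\I \land A) = \max(1,m) = m$, which relies on finite satisfaction degrees being at least $1$, and verify the symmetry of $\strongEquiv{\cl}$ that licenses the assumption $m < n$.
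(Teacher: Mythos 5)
Your proposal is correct and follows essentially the same route as the paper: the first implication is chained through Lemma~\ref{lemma:fullEquivSubstitution}, and the second is proved by contraposition using Proposition~\ref{prop:synthesis} to synthesize a calibrated competitor inside a disjunctive context $(\cdot \land A) \lor (\cdot)$ so that the degree gap at $\I$ surfaces in the preferred models. The only difference is cosmetic: the paper pins both disjuncts to the minimum degree $k=\min(m,n)$ using variable-disjoint formulas and a fresh atom (which keeps the designated occurrence of $A$ isolated, a property it reuses later in Proposition~\ref{prop:strongEquivForOptDifferentiating}), whereas you localize via the characteristic formula $G_\I$ and a synthesized $D$; both constructions work.
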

\begin{proof}
	${A \fullEquiv{\cl} B} \implies {A \strongEquiv{\cl} B}$ follows from Lemma~\ref{lemma:fullEquivSubstitution}. It remains to show that $A \strongEquiv{\cl} B \implies A \degreeEquiv{\cl} B$: assume $A \not\degreeEquiv{\cl} B$. We want to show that there is a formula $F$ such that $\prefModels{\cl}{F} \neq \prefModels{\cl}{F\replace{A}{B}}$. 
	
	Since $A \not\degreeEquiv{\cl} B$, there exists an interpretation $\I$ such that $\I \sat{\cl}{m} A$ and $\I \sat{\cl}{n} B$ with $m \neq n$. Let $k = min(m,n)$. Due to Proposition~\ref{prop:synthesis}, we know that there are formulas $G$ and $H$ such that the minimum degree that satisfies $G$ or $H$ is $k$. Thus, there are interpretations $\I_G$, $\I_H$ such that $\I_G \sat{\cl}{k} G$ and $\I_H \sat{\cl}{k} H$. We can assume that $G$ and $H$ are variable disjoint from each other as well as from $A$ and $B$, since we can always rename variables if necessary. By Lemma~\ref{lemma:nonOccuringAtomsHaveNoInfluence}, we can assume $\I \cap \I_G = \emptyset$, $\I \cap \I_H = \emptyset$, and $\I_G \cap \I_H = \emptyset$. We now construct 
	\begin{align*}
	F = (A \land G) \lor (a \land H),
	\end{align*}
	where $a$ is a fresh variable that does not occur in $A$, $B$, $G$ or $H$. We can therefore also assume that $a$ is not contained in $\I$, $\I_G$, or $\I_H$. 
	
	Observe that the minimal degree with which $F$ (or $F\replace{A}{B}$) can possibly be satisfied is $k$, as either $G$ or $H$ need to be satisfied. Furthermore, $\I_H \cup \{a\} \sat{\cl}{k} F$ and $\I_H \cup \{a\} \sat{\cl}{k} F\replace{A}{B}$. This means that any preferred model of $F$ must satisfy $F$ with a degree of $k$. The same is true for preferred models of $F\replace{A}{B}$. Also observe that since $a$ is not contained in $\I$ or $\I_G$, $\I \cup \I_G \sat{\cl}{\infty} (a \land H)$. We distinguish two cases:	
	\begin{enumerate}
		\item $k = m$. Then $\I \sat{\cl}{k} A$, and therefore $\I \cup \I_G \sat{\cl}{k} (A \land G)$. This implies $\I \cup \I_G \sat{\cl}{k} F$, i.e.\ $\I \cup \I_G \in \prefModels{\cl}{F}$. Analogously, since $\I \sat{\cl}{n} B$, we have that $\I \cup \I_G \sat{\cl}{n} (B \land G)$. Therefore, $\I \cup \I_G \sat{\cl}{n} F\replace{A}{B}$. Since $n > k$, we have $\I \cup \I_G \not\in \prefModels{\cl}{F\replace{A}{B}}$.
		\item $k = n$. Analogous to (1), but with $\I \cup \I_G \not\in \prefModels{\cl}{F}$ and $\I \cup \I_G \in \prefModels{\cl}{F\replace{A}{B}}$.
	\end{enumerate}
	It can be concluded that $\prefModels{\cl}{F} \neq \prefModels{\cl}{F\replace{A}{B}}$, i.e.\ $A \not\strongEquiv{\cl} B$.
\end{proof}

In general, all three equivalence notions are different. For example, in $\sccl$, $a$ and $(a \scclConn a)$ are not fully equivalent, since they differ in optionality, but they are strongly equivalent, since optionality does not impact satisfaction degrees in $\sccl$. On the other hand, in $\qcl$, $a$ and $(a \qclConn a)$ are degree-equivalent and $F = (((a \qclConn b) \lor (c \qclConn d)) \land \neg a \land \neg c)$ has $\{b\}$ as a preferred model, but
replacing the first occurrence of $a$ in $F$ by $a \qclConn a$ means $\{b\}$ is no longer preferred.
However, for natural classes of choice logics strong equivalence coincides with
either degree-equivalence or full equivalence.

\begin{definition} \label{def:optIgnoring}
	A choice logic $\cl$ is called optionality-ignoring if for all $\generalConn \in \connectives{\cl}$,
	$\degree{\cl}(\I,F \generalConn G) = \degree{\cl}(\I,F' \generalConn G')$ whenever
	$\degree{\cl}(\I,F) = \degree{\cl}(\I,F')$ and $\degree{\cl}(\I,G) = \degree{\cl}(\I,G')$.
\end{definition}

It is easy to see that $\pl$ and $\sccl$ are optionality-ignoring, while $\qcl$, $\ccl$, and $\lcl$ are not. 

\begin{proposition} \label{prop:strongEquivForOptIgnoring}
	Let $\cl$ be an optionality-ignoring choice logic. For all $\cl$-formulas $A,B$ we have that $A \strongEquiv{\cl} B$ iff $A \degreeEquiv{\cl} B$.
\end{proposition}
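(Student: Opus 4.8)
The plan is to reuse Proposition~\ref{prop:equivalences} for the easy direction and then reduce the hard direction to a substitution statement about degree-equivalence. The implication $A \strongEquiv{\cl} B \implies A \degreeEquiv{\cl} B$ holds for every choice logic by Proposition~\ref{prop:equivalences}, so I only need to prove the converse under the optionality-ignoring assumption. Assume $A \degreeEquiv{\cl} B$. By Definition~\ref{def:strongEquivalence} it suffices to show $\prefModels{\cl}{F} = \prefModels{\cl}{F\replace{A}{B}}$ for every $\cl$-formula $F$, and by Lemma~\ref{lemma:degreeEquivImpliesSamePrefModels} this follows once I establish the stronger claim that $F \degreeEquiv{\cl} F\replace{A}{B}$ for every $F$. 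So the whole proof hinges on showing that degree-equivalence is preserved under substitution in an optionality-ignoring logic.

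I would prove $F \degreeEquiv{\cl} F\replace{A}{B}$ by structural induction on $F$, tracking the single occurrence of $A$ that the substitution replaces. In the base case $F = A$ we have $F\replace{A}{B} = B$, and the claim is exactly the hypothesis $A \degreeEquiv{\cl} B$; if $F$ is a variable different from $A$, then the replaced occurrence cannot be a proper subformula, so $F\replace{A}{B} = F$ and the claim is trivial. For the step case the replaced occurrence lies in a unique immediate subformula. For $\neg F_1$, $F_1 \land F_2$, and $F_1 \lor F_2$ the induction hypothesis gives that the affected operand keeps the same degree under every interpretation $\I$, and since the semantics of $\neg$, $\land$, and $\lor$ in Definition~\ref{def:clSatisfactionDegree} depend only on the degrees of the operands (via the identity, $\max$, and $\min$ respectively), the degree of the whole formula is unchanged.

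The crucial case, and the main obstacle, is a choice connective $F = F_1 \generalConn F_2$ with $\generalConn \in \connectives{\cl}$. Here the general semantics $\degree{\cl}(\I, F_1 \generalConn F_2) = \degreeConn{\cl}{\generalConn}(\opt{\cl}(F_1), \opt{\cl}(F_2), \degree{\cl}(\I,F_1), \degree{\cl}(\I,F_2))$ also reads the optionalities of the operands, and degree-equivalence of $A$ and $B$ does not guarantee $\opt{\cl}(A) = \opt{\cl}(B)$ (indeed, the counterexamples discussed after Proposition~\ref{prop:equivalences} exploit exactly this). Hence substituting $B$ for $A$ inside $F_1$ may change $\opt{\cl}(F_1)$, so the generic argument from the classical cases breaks. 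This is precisely where the optionality-ignoring assumption enters: by Definition~\ref{def:optIgnoring}, whenever both operands retain their degrees under $\I$, so does the compound, irrespective of any change in optionality. Since the induction hypothesis supplies $\degree{\cl}(\I,F_1) = \degree{\cl}(\I,F_1\replace{A}{B})$ while the untouched operand $F_2$ trivially keeps its degree, the optionality-ignoring property yields $\degree{\cl}(\I, F_1 \generalConn F_2) = \degree{\cl}(\I, (F_1\replace{A}{B}) \generalConn F_2)$ for all $\I$, completing the induction. Applying Lemma~\ref{lemma:degreeEquivImpliesSamePrefModels} to $F \degreeEquiv{\cl} F\replace{A}{B}$ then gives $\prefModels{\cl}{F} = \prefModels{\cl}{F\replace{A}{B}}$ for all $F$, i.e.\ $A \strongEquiv{\cl} B$.
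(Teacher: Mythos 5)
Your proposal is correct and follows essentially the same route as the paper's proof: the forward direction via Proposition~\ref{prop:equivalences}, and the converse by showing $F \degreeEquiv{\cl} F\replace{A}{B}$ for all $F$ via structural induction analogous to Lemma~\ref{lemma:fullEquivSubstitution}, with the optionality-ignoring property handling exactly the choice-connective case where the operands' optionalities may differ. You simply spell out the induction in more detail than the paper does.
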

\begin{proof}
	The only-if-direction follows directly from Proposition~\ref{prop:equivalences}. For the if-direction, we can prove that if $A \degreeEquiv{\cl} B$, then $F \degreeEquiv{\cl} F\replace{A}{B}$ for all $\cl$-formulas~$F$, which implies that $\prefModels{\cl}{F} = \prefModels{\cl}{F\replace{A}{B}}$ for all $\cl$-formulas~$F$. This can be done by structural induction, analogous to Lemma~\ref{lemma:fullEquivSubstitution}. The critical difference here is that $\cl$ is optionality-ignoring, and therefore optionality can not influence degrees when substituting $B$ for $A$. 
\end{proof}

Similar to the above result, we can characterize strong equivalence by full equivalence in choice logics where, as soon as two formulas differ in optionality, they can no longer be safely substituted for each other without affecting satisfaction degrees.

\begin{definition} \label{def:optDifferentiating}
	A choice logic $\cl$ is called optionality-differentiating if for all $\cl$-formulas $A$ and $B$ with $\opt{\cl}(A) \neq \opt{\cl}(B)$, there is an $\cl$-formula $F$ such that $F \not\degreeEquiv{\cl} F\replace{A}{B}$.
\end{definition}

$\pl$, $\qcl$, $\ccl$, and $\lcl$ are optionality-differentiating (while $\sccl$ is not). We show this for $\qcl$: Consider any $A,B$ such that $\opt{\qcl}(A) \neq \opt{\qcl}(B)$. Then $F = ((A \land \bot) \qclConn a)$ has the desired property: $(A \land \bot)$ can never be satisfied, and since ${\opt{\qcl}(A \land \bot) = \opt{\qcl}(A)}$, $F$ is satisfied by $\{a\}$ with a degree of $\opt{\qcl}(A) + 1$. Likewise, $F\replace{A}{B}$ is satisfied by $\{a\}$ with a degree of $\opt{\qcl}(B) + 1$. %

\begin{proposition} \label{prop:strongEquivForOptDifferentiating}
	Let $\cl$ be an optionality-differentiating choice logic. For all $\cl$-formulas $A,B$ we have that $A \strongEquiv{\cl} B$ iff $A \fullEquiv{\cl} B$.
\end{proposition}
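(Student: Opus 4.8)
The plan is to prove both directions of the biconditional, relying on Proposition~\ref{prop:equivalences} for one of them. For the if-direction, assume $A \fullEquiv{\cl} B$. By Lemma~\ref{lemma:fullEquivSubstitution}, this gives $F \fullEquiv{\cl} F\replace{A}{B}$ for all $\cl$-formulas $F$, which in particular means $F \degreeEquiv{\cl} F\replace{A}{B}$. By Lemma~\ref{lemma:degreeEquivImpliesSamePrefModels}, degree-equivalence implies identical preferred models, so $\prefModels{\cl}{F} = \prefModels{\cl}{F\replace{A}{B}}$ for all $F$, which is exactly the definition of $A \strongEquiv{\cl} B$. In fact this direction holds for \emph{any} choice logic and is precisely the first implication of Proposition~\ref{prop:equivalences}, so I would simply cite that proposition rather than redo the argument.

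The only-if-direction is where the hypothesis that $\cl$ is optionality-differentiating is needed. Assume $A \strongEquiv{\cl} B$. I must show $A \fullEquiv{\cl} B$, i.e.\ both $A \degreeEquiv{\cl} B$ and $\opt{\cl}(A) = \opt{\cl}(B)$. Degree-equivalence comes for free from the second implication of Proposition~\ref{prop:equivalences}: strong equivalence always implies degree-equivalence in any choice logic. So the real work is to establish the equality of optionalities, and this is exactly where optionality-differentiation is the crucial ingredient.

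The argument for $\opt{\cl}(A) = \opt{\cl}(B)$ is by contradiction. Suppose $\opt{\cl}(A) \neq \opt{\cl}(B)$. Since $\cl$ is optionality-differentiating (Definition~\ref{def:optDifferentiating}), there exists an $\cl$-formula $F$ with $F \not\degreeEquiv{\cl} F\replace{A}{B}$; that is, there is an interpretation $\I$ with $\degree{\cl}(\I,F) \neq \degree{\cl}(\I,F\replace{A}{B})$. The expected obstacle is bridging the gap between this single-interpretation degree difference and a genuine difference in \emph{preferred} models, since strong equivalence only constrains preferred models, not degrees at arbitrary interpretations. To close this gap I would reuse the construction idea from the proof of Proposition~\ref{prop:equivalences}: starting from the witnessing $\I$ where $F$ and $F\replace{A}{B}$ receive distinct degrees, embed $F$ into a larger context $F'$ (using variable-disjoint synthesized formulas from Proposition~\ref{prop:synthesis} and a fresh auxiliary variable) so that the minimal achievable degree of $F'$ and $F'\replace{A}{B}$ is forced, and the interpretation extending $\I$ becomes preferred for one of the two formulas but not the other. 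This yields $\prefModels{\cl}{F'} \neq \prefModels{\cl}{F'\replace{A}{B}}$, contradicting $A \strongEquiv{\cl} B$.

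Concretely, the cleanest route is to observe that the proof of the second implication in Proposition~\ref{prop:equivalences} already shows that $F \not\degreeEquiv{\cl} F\replace{A}{B}$ for \emph{some} $F$ suffices to produce a context witnessing $A \not\strongEquiv{\cl} B$ --- one simply applies that proposition's construction to the pair $F$ and $F\replace{A}{B}$ in place of $A$ and $B$, since $F\replace{A}{B}\replace{F}{F\replace{A}{B}}$ is again a substitution instance. Thus the whole only-if-direction reduces to: strong equivalence of $A,B$ forces degree-equivalence of $F$ and $F\replace{A}{B}$ for every $F$ (otherwise Proposition~\ref{prop:equivalences} applied to that pair breaks strong equivalence), and optionality-differentiation then forbids $\opt{\cl}(A) \neq \opt{\cl}(B)$. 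The main subtlety to get right is ensuring the substitution instances compose correctly and that the fresh-variable/variable-disjointness bookkeeping from Proposition~\ref{prop:synthesis} carries over, but no new idea beyond those already in the excerpt is required.
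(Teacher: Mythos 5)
Your proposal is correct and follows essentially the same route as the paper's proof: the if-direction and the degree-equivalence half of the only-if-direction are both read off from Proposition~\ref{prop:equivalences}, and the equality of optionalities is obtained by contradiction, feeding the witness $F \not\degreeEquiv{\cl} F\replace{A}{B}$ from Definition~\ref{def:optDifferentiating} back into the construction of Proposition~\ref{prop:equivalences} and observing that the two substitutions compose (the paper likewise notes that the outer context contains the witness only once and that $A$ occurs only inside it). The ``subtlety'' you flag is exactly the point the paper handles with that single-occurrence observation, so nothing further is missing.
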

\begin{proof} %
	The if-direction follows directly from Proposition~\ref{prop:equivalences}. For the only-if-direction, assume that  $A \strongEquiv{\cl} B$. Then, from Proposition~\ref{prop:equivalences}, we know that $A \degreeEquiv{\cl} B$. It remains to show that $\opt{\cl}(A) = \opt{\cl}(B)$. 
	
	Towards a contradiction, assume that $\opt{\cl}(A) \neq \opt{\cl}(B)$. This means that, since $\cl$ is optionality-differentiating, there exists a formula $A' \in \formulas{\cl}$ such that $A' \not\degreeEquiv{\cl} A'\replace{A}{B}$. Observe that therefore $A$ must occur in $A'$. Let $B' = A'\replace{A}{B}$. Then $A' \not\degreeEquiv{\cl} B'$. By the contrapositive of Proposition~\ref{prop:equivalences}, there exists a formula~$F$ such that $\prefModels{\cl}{F} \neq \prefModels{\cl}{F\replace{A'}{B'}}$. By the construction of $F$ in the proof for Proposition~\ref{prop:equivalences}, we can assume that $A'$ occurs only once in $F$, and that $A$ only occurs in $A'$. Therefore, replacing $A'$ by $A'\replace{A}{B}$ in $F$ is the same as simply replacing $A$ by $B$ in $F$, i.e.\ $F\replace{A'}{B'} = F\replace{A'}{A'\replace{A}{B}} = F \replace{A}{B}$. Thus, $\prefModels{\cl}{F} \neq \prefModels{\cl}{F\replace{A}{B}}$. But then  $A \not\strongEquiv{\cl} B$. Contradiction. 
\end{proof}

In fact, strong equivalence and full equivalence are interchangeable \textit{only} for optionality-differentiating choice logics. Consider a logic $\cl$ which is not optionality-differentiating. Then, by definition, there are $\cl$-formulas $A$~and~$B$ such that $\opt{\cl}(A) \neq \opt{\cl}(B)$ while $F \degreeEquiv{\cl} F\replace{A}{B}$ holds for all $\cl$-formulas $F$. 
This implies $A \strongEquiv{\cl} B$.
But $A \not\fullEquiv{\cl} B$, since $\opt{\cl}(A) \neq \opt{\cl}(B)$. 

In Section~\ref{sec:examples}, we 
discussed associativity of $\qclConn$ in $\qcl$. %
A matching strong equivalence result 
can now be achieved for $\ccl$ as well, and it is easily proven using
Proposition~\ref{prop:strongEquivForOptDifferentiating}.

\begin{lemma} \label{lemma:orderedDisjunctionIsAssociative}
	The choice connective  $\cclConn \in \connectives{\ccl}$ is associative, meaning that $((F \cclConn G) \cclConn H) \strongEquiv{\ccl} (F \cclConn (G \cclConn H))$ 
	for any $F,G,H\in
	\formulas{\ccl}$. 
\end{lemma}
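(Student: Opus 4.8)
The plan is to apply Proposition~\ref{prop:strongEquivForOptDifferentiating}. Since $\ccl$ is optionality-differentiating (as observed just before that proposition), strong equivalence coincides with full equivalence, so it suffices to establish $((F \cclConn G) \cclConn H) \fullEquiv{\ccl} (F \cclConn (G \cclConn H))$. By Definition~\ref{def:fullEquivalence} this amounts to two things: equality of optionality, and degree-equivalence. I would fix arbitrary $F,G,H \in \formulas{\ccl}$ and write $k = \opt{\ccl}(F)$, $\ell = \opt{\ccl}(G)$, $p = \opt{\ccl}(H)$.

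The optionality part is immediate. Applying $\opt{\ccl}(A \cclConn B) = \opt{\ccl}(A) + \opt{\ccl}(B)$ twice shows that both $\opt{\ccl}((F \cclConn G) \cclConn H)$ and $\opt{\ccl}(F \cclConn (G \cclConn H))$ evaluate to $k + \ell + p$, so the optionalities agree.

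For degree-equivalence I would fix an arbitrary interpretation $\I$, set $m = \degree{\ccl}(\I,F)$, $n = \degree{\ccl}(\I,G)$, $r = \degree{\ccl}(\I,H)$, and verify that both parenthesizations yield the same degree by computing the inner connective first and feeding the result into the outer one, according to Definition~\ref{def:cclAlternative}. The natural case split is on $m$. If $m = \infty$, then $F \cclConn G$ already has degree $\infty$ and hence so does the whole left-associated formula, while on the right $F$ is the first operand of the outer connective and again forces degree $\infty$; both sides are $\infty$. If $1 < m < \infty$, then $F \cclConn G$ has degree $m + \ell$, which is $>1$, so the left-associated formula has degree $m + \ell + p$; on the right the outer connective sees first operand $F$ with $m > 1$ and second-operand optionality $\ell + p$, again giving $m + (\ell + p)$. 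Both sides equal $m + \ell + p$ regardless of $n$ and $r$. The only delicate case is $m = 1$, which I would further split according to $n$ (and, when $n = 1$, according to $r$), tracking how the degree is either passed through unchanged or penalized by the appropriate optionality offset.

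The main obstacle is precisely this $m = 1$ branch, where the degree genuinely depends on $n$ and $r$ and one must check that the optionality penalties line up on both sides. For instance, when $m = 1$ and $n = \infty$ the inner $F \cclConn G$ takes degree $1 + \ell$ (a value $> 1$), so the outer connective adds $p$ to give $1 + \ell + p$; on the right the inner $G \cclConn H$ has degree $\infty$, and the outer connective then applies the $m=1,\ n=\infty$ rule with second-operand optionality $\ell + p$, yielding $1 + (\ell + p)$ as well. Verifying the remaining subcases in the same manner ($n = 1$ with $r$ finite or infinite, where the degree passes through to $r$ or to $1 + p$; and $1 < n < \infty$, where both sides collapse to $n + p$) completes the computation. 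Since $\I$ was arbitrary, the two formulas are degree-equivalent, hence fully equivalent, and therefore strongly equivalent by Proposition~\ref{prop:strongEquivForOptDifferentiating}.
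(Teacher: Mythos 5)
Your proposal is correct and follows essentially the same route as the paper: reduce strong equivalence to full equivalence (the paper likewise invokes Proposition~\ref{prop:strongEquivForOptDifferentiating}), observe that both parenthesizations have optionality $\opt{\ccl}(F)+\opt{\ccl}(G)+\opt{\ccl}(H)$, and establish degree-equivalence by the same exhaustive case split on $\degree{\ccl}(\I,F)$, then $\degree{\ccl}(\I,G)$, then $\degree{\ccl}(\I,H)$. All of your case computations agree with the paper's.
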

\begin{proof}
	For readability, we write $\opt{}$ instead of $\opt{\ccl}$ and $\degree{}$ instead of $\degree{\ccl}$ in this proof.
	First of all, by Definition~\ref{def:cclAlternative}, $\opt{}((F \cclConn G) \cclConn H) = \opt{}(F \cclConn (G \cclConn H))$. We further show that $((F \cclConn G) \cclConn H) \degreeEquiv{\ccl} (F \cclConn (G \cclConn H))$. Let $\I$ be an arbitrary interpretation. Let $d_A = \degree{}(\I, A)$ for $A \in \{F,G,H\}$. We distinguish the following cases:
	\begin{itemize}
		
		\item $d_F = 1$, $d_G = 1$, and $d_H < \infty$. Then $\degree{}(\I, F \cclConn G) = 1$, $\degree{}(\I, G \cclConn H) = d_H$. This implies $\degree{}(\I, (F \cclConn G) \cclConn H) = d_H = \degree{}(\I, F \cclConn (G \cclConn H))$.
		
		\item $d_F = 1$, $d_G = 1$, and $d_H = \infty$. Then $\degree{}(\I, F \cclConn G) = 1$, $\degree{}(\I, G \cclConn H) = 1 + \opt{}(H)$. We can conclude that $\degree{}(\I, (F \cclConn G) \cclConn H) = 1 + \opt{}(H) = \degree{}(\I, F \cclConn (G \cclConn H))$.
		
		\item $d_F = 1$ and $1 < d_G < \infty$. Then $\degree{}(\I, F \cclConn G) = d_G$, $\degree{}(\I, G \cclConn H) = d_G + \opt{}(H)$. This entails that $\degree{}(\I, (F \cclConn G) \cclConn H) = d_G + \opt{}(H) = \degree{}(\I, F \cclConn (G \cclConn H))$.
		
		\item $d_F = 1$ and $d_G = \infty$. Then $\degree{}(\I, F \cclConn G) = 1 + \opt{}(G)$, $\degree{}(\I, G \cclConn H) = \infty$. This further implies $\degree{}(\I, (F \cclConn G) \cclConn H) = 1 + \opt{}(G) + \opt{}(H) = \degree{}(\I, F \cclConn (G \cclConn H))$.
		
		\item $1 < d_F < \infty$. Then $\degree{}(\I, F \cclConn G) = d_F + \opt{}(G)$, which entails that $\degree{}(\I, (F \cclConn G) \cclConn H) = d_F + \opt{}(G) + \opt{}(H) = \degree{}(\I, F \cclConn (G \cclConn H))$.
		
		\item $d_F = \infty$. Then $\degree{}(\I, F \cclConn G) = \infty$, which implies
		$\degree{}(\I, (F \cclConn G) \cclConn H) = \infty = \degree{}(\I, F \cclConn (G \cclConn H))$. \qedhere
	\end{itemize}
\end{proof}

To conclude, knowing whether a choice logic is optionality-ignoring (e.g.\ $\sccl$%
) or optionality-differentiating (e.g.\ $\qcl$, $\ccl$, $\lcl$%
) is useful, since it allows to decide strong equivalence via degree- or full equivalence.
However, note that choice logics which are neither optionality-ignoring nor -differentiating do exist.

\section{Computational Complexity} \label{sec:complexity}

Next, we examine the computational complexity of choice logics.
So far we have imposed only few restrictions on optionality- and degree functions, which means that there are choice logics whose semantics are given by computationally expensive or even undecidable functions.
For reasons of practicality, we focus on so-called tractable choice logics.

\begin{definition}
	A choice logic $\cl$ is called tractable if the optionality- and degree functions of every choice connective in $\cl$ are polynomial-time computable. 
\end{definition}

All logics presented in this paper are tractable in this sense. The first problems that we look at are concerned with satisfaction degrees: 

\begin{definition}
	We define decision problems 
	\begin{itemize}
		\item
		\clModelChecking{$\cl$}: given 
		$F\in\formulas{\cl}$, an interpretation $\I$, and a satisfaction degree $k \in \setOfSatisfactionDegrees$, does $\degree{\cl}(\I,F) \leq k$ hold; %
		\item
		\clSAT{$\cl$}: given 
		$F\in\formulas{\cl}$ and a satisfaction degree $k$, is there an interpretation $\I$ such that $\degree{\cl}(\I,F) \leq k$.
	\end{itemize}
\end{definition}

These two problems are straightforward generalizations of model checking and satisfiability for classical propositional logic, and they do not differ in complexity from their classical counterparts.

\begin{proposition} \label{prop:clModelCheckingMembership}
	\clModelChecking{$\cl$} is in~$\PolyTime$ for any choice logic~$\cl$ that is tractable. 
\end{proposition}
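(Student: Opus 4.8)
The plan is to compute the satisfaction degree $\degree{\cl}(\I,F)$ directly by a single bottom-up pass over the parse tree of $F$, and then compare the result with $k$. Concretely, I would process the subformulas of $F$ in order of increasing size, maintaining for each subformula $F'$ the pair $(\opt{\cl}(F'), \degree{\cl}(\I,F'))$; the optionality component must be carried along because, by clause~5 of Definition~\ref{def:clSatisfactionDegree}, the degree of a choice-connective node depends on the optionalities of its children. The base case (variables) is read off from $\I$ directly, and each internal node is handled by the corresponding clause of Definition~\ref{def:clOptionality} and Definition~\ref{def:clSatisfactionDegree}: for $\neg$, $\land$, $\lor$ the values are obtained by the fixed rules ($\max$/$\min$ and the negation table), and for a choice connective $\generalConn$ by evaluating $\optConn{\cl}{\generalConn}$ and $\degreeConn{\cl}{\generalConn}$ on the optionalities and degrees of the two children. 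Since $\cl$ is tractable, these two functions are polynomial-time computable. There are at most $\sizeof{F}$ subformulas, so the pass performs at most $\sizeof{F}$ such evaluations.

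The one point that requires care is ensuring that the numbers manipulated during this pass stay small enough that each function evaluation really costs only polynomial time in $\sizeof{F}$. For this I would first establish, by structural induction, that the optionality of every subformula is at most singly exponential in its size: using the bound $\optConn{\cl}{\generalConn}(k,\ell) \leq (k+1)(\ell+1)$ from Definition~\ref{def:clOptionality}, a choice connective at most (roughly) multiplies the optionalities of its operands, so on the logarithmic, i.e.\ bit-length, scale its contribution is merely additive (a constant per node), while the classical connectives do not increase it. Summing over the parse tree yields $\opt{\cl}(F') \leq 2^{c\,\sizeof{F'}}$ for a suitable constant $c$, so every optionality is representable using $\bigO(\sizeof{F})$ bits. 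By Lemma~\ref{lemma:degreeIsBoundedByOptionality}, every degree is either $\infty$ or bounded by the corresponding optionality, hence also has $\bigO(\sizeof{F})$ bits, treating $\infty$ as a distinguished flag. Thus all inputs fed to the optionality- and degree-functions throughout the pass have bit-length polynomial in $\sizeof{F}$.

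Putting these together: each of the $\bigO(\sizeof{F})$ evaluation steps runs in time polynomial in the bit-length of its inputs (by tractability), hence polynomial in $\sizeof{F}$, and produces an output of polynomial bit-length, so the intermediate values never blow up across the pass. After computing $\degree{\cl}(\I,F)$ I would compare it against $k$, handling the $\infty$ cases separately (if $k=\infty$ the answer is always yes; if $\degree{\cl}(\I,F)=\infty$ and $k<\infty$ it is no; otherwise a numeric comparison of two numbers of polynomial bit-length suffices). The whole procedure therefore runs in polynomial time, placing \clModelChecking{$\cl$} in $\PolyTime$.

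I expect the main obstacle to be precisely this bit-length bound. Without it, ``polynomial-time computable'' for the choice-connective functions would only guarantee running time polynomial in the magnitude of their numeric arguments, and these arguments could a priori be exponentially large owing to the repeated multiplicative growth of optionality; the computation would then not obviously be polynomial in $\sizeof{F}$. The $(k+1)(\ell+1)$ ceiling on optionality in Definition~\ref{def:clOptionality}, together with Lemma~\ref{lemma:degreeIsBoundedByOptionality}, is exactly what rules this out, keeping every intermediate number singly exponential in value and hence polynomial in description length.
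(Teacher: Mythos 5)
Your proof is correct and follows essentially the same route as the paper's: a single bottom-up evaluation of $\degree{\cl}(\I,F)$ over the parse tree, with each node handled in polynomial time by tractability of the optionality- and degree-functions, followed by a comparison with $k$. The only difference is that you additionally justify why the intermediate values stay of polynomial bit-length (via the $(k+1)\cdot(\ell+1)$ ceiling on optionality and Lemma~\ref{lemma:degreeIsBoundedByOptionality}), a point the paper's proof leaves implicit; this is a welcome refinement of the same argument rather than a different approach.
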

\begin{proof}
	We can compute the degree of $F$ under $\I$ by applying $\degree{\cl}$ to $\I$ and $F$ (see Definition~\ref{def:clSatisfactionDegree}). Since $\cl$ is tractable, we know that every step in this recursion runs in polynomial time. The depth of the recursion can not exceed the length of $F$. Neither can the width of the recursion exceed the length of $F$, since every atom in $F$ will be reached exactly once in the recursion. After we computed $\degree{\cl}(\I,F)$, we can simply compare the result with $k$, and return either "yes" or "no" accordingly. 
\end{proof}

\begin{proposition} \label{prop:clSatMembership}
	\clSAT{$\cl$} is $\NP$-complete for any choice logic~$\cl$ that is tractable. 
\end{proposition}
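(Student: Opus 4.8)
The plan is to prove the two halves of $\NP$-completeness independently: membership via a guess-and-check argument, and hardness via a reduction from classical satisfiability.

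For membership, I would use the standard nondeterministic algorithm. First observe that by Lemma~\ref{lemma:nonOccuringAtomsHaveNoInfluence} it suffices to search for a witnessing interpretation among subsets of $\variablesOf{F}$, so any relevant $\I$ can be represented in space polynomial in the size of $F$. The certificate is then such an interpretation $\I \subseteq \variablesOf{F}$, and verification consists of checking whether $\degree{\cl}(\I,F) \leq k$. This is precisely an instance of \clModelChecking{$\cl$}, which is solvable in polynomial time by Proposition~\ref{prop:clModelCheckingMembership}, using tractability of $\cl$. Hence \clSAT{$\cl$} lies in $\NP$.

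For hardness, I would reduce from propositional \SAT, which is $\NP$-complete. Given a classical formula $\varphi$, note that $\varphi$ is an $\cl$-formula for every choice logic $\cl$, since formulas built only from $\neg$, $\land$, and $\lor$ are formulas of every such logic. Map $\varphi$ to the \clSAT{$\cl$} instance $(\varphi, 1)$; this map is computable in linear time. Correctness follows from the observation recorded after Definition~\ref{def:clSatisfactionDegree}: for a classical formula, $\I \sat{\cl}{1} \varphi$ holds iff $\I \models \varphi$, and otherwise $\degree{\cl}(\I,\varphi) = \infty$. Consequently $\degree{\cl}(\I,\varphi) \leq 1$ for some $\I$ exactly when $\varphi$ is classically satisfiable, so the reduction is correct.

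I do not expect a genuine obstacle here; the only two points needing care are bounding the nondeterministic guess to subsets of $\variablesOf{F}$ so that the certificate stays polynomial, and noticing that fixing the threshold $k=1$ collapses the degree semantics of a classical formula to ordinary classical satisfaction, which is what makes the reduction from \SAT immediate. It is worth remarking that hardness already holds for $\pl$, so the choice connectives play no role in the lower bound; the real content of the proposition is that tractability of the degree functions is enough to keep the problem in $\NP$ despite the richer semantics.
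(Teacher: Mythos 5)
Your proposal is correct and matches the paper's proof essentially step for step: membership via a polynomially balanced, polynomially decidable certificate relation (interpretations restricted to $\variablesOf{F}$, verified by \clModelChecking{$\cl$}), and hardness via the reduction $\varphi \mapsto (\varphi,1)$ from \SAT\ using the fact that degree semantics collapses to classical satisfaction on classical formulas. No differences worth noting.
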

\begin{proof}
	We prove $\NP$-membership 
	by providing a polynomially balanced and polynomially decidable certificate relation for \clSAT{$\cl$}. Let
	\begin{align*}
	R = \{((F,k), \I) \mid \degree{\cl}(\I,F) \leq k \}.
	\end{align*}
	Clearly, $(F,k)$ is a yes-instance of \clSAT{$\cl$} iff there is an interpretation $\I$ such that $((F,k),\I) \in R$. $R$ is polynomially balanced, since we can assume that $\I \subseteq \variablesOf{F}$. Furthermore, $R$ is polynomially decidable, since \clModelChecking{$\cl$} is in~$\PolyTime$. 
	
	For $\NP$-hardness, we provide a reduction from \SAT. Let $F$ be an arbitrary classical formula. Then, for any interpretation $\I$ it holds that $\I \models F \iff \I \sat{\cl}{1} F \iff \degree{\cl}(\I,F) \leq 1$. Thus, 
	\begin{align*}
	& F \text{ is a yes-instance of \SAT} \\
	& \iff \text{there is an interpretation $\I$ such that } \I \models F \\
	& \iff \text{there is an interpretation $\I$ such that } \degree{\cl}(\I,F) \leq 1 \\
	& \iff (F,1) \text{ is a yes-instance of \clSAT{$\cl$}}. \qedhere
	\end{align*}
\end{proof}

The picture changes when we reformulate the problems in terms of 
preferred models.

\begin{definition}
	We define decision problems 
	\begin{itemize}
		\item
		\clPrefModelChecking{$\cl$}: given 
		$F\in\formulas{\cl}$
		and an interpretation $\I$, is $\I \in \prefModels{\cl}{F}$; %
		\item
		\clPrefModelSAT{$\cl$}: given 
		$F\in\formulas{\cl}$
		and variable $a$, is there an interpretation $\I \in \prefModels{\cl}{F}$ such that $a\in \I$.
	\end{itemize}
\end{definition}

\begin{proposition} \label{prop:clPrefModelCheckingMembership}
	\clPrefModelChecking{$\cl$} is in $\coNP$ for any tractable choice logic $\cl$.
\end{proposition}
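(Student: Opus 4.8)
The plan is to prove membership in $\coNP$ by showing that the \emph{complementary} problem---deciding whether $\I \notin \prefModels{\cl}{F}$---lies in $\NP$, exactly as $\NP$-membership was argued for \clSAT{$\cl$} in Proposition~\ref{prop:clSatMembership}, namely by exhibiting a polynomially balanced and polynomially decidable certificate relation. First I would unfold the definition of preferred model: by Definition, $\I \notin \prefModels{\cl}{F}$ holds precisely when either $\degree{\cl}(\I,F) = \infty$ (so $\I$ is not even a model of $F$), or there is some interpretation $\J$ with $\degree{\cl}(\J,F) < \degree{\cl}(\I,F)$ (so $\I$ is a model but not of minimal degree). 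A witnessing $\J$ for the second disjunct is the natural certificate, while the first disjunct is checkable outright.

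Concretely, I would set
\begin{align*}
R = \{\, ((F,\I),\J) \mid \degree{\cl}(\I,F) = \infty \text{ or } \degree{\cl}(\J,F) < \degree{\cl}(\I,F) \,\}.
\end{align*}
Then $(F,\I)$ is a yes-instance of the complement of \clPrefModelChecking{$\cl$} if and only if there exists $\J$ with $((F,\I),\J) \in R$. By Lemma~\ref{lemma:nonOccuringAtomsHaveNoInfluence} I may restrict attention to $\J \subseteq \variablesOf{F}$, so $R$ is polynomially balanced. It is polynomially decidable because, since $\cl$ is tractable, Proposition~\ref{prop:clModelCheckingMembership} lets me compute both $\degree{\cl}(\I,F)$ and $\degree{\cl}(\J,F)$ in polynomial time; the subsequent test---``is $\degree{\cl}(\I,F)=\infty$, or is $\degree{\cl}(\J,F)$ strictly smaller?''---is then immediate. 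This places the complement in $\NP$, and hence \clPrefModelChecking{$\cl$} in $\coNP$.

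There is no deep obstacle here; the one point that must be handled carefully is the $\degree{\cl}(\I,F)=\infty$ case. When $\I$ is not a model, it fails to be a preferred model irrespective of any other interpretation, so this situation has to be folded into the certificate relation (where it simply renders every $\J$ a valid certificate) rather than being left to the existential guess. Folding it in keeps the argument uniform and correct even for formulas $F$ with $\prefModels{\cl}{F}=\emptyset$, where no $\J$ of finite degree exists. Everything else reduces to the polynomial-time computability of satisfaction degrees guaranteed by tractability and Proposition~\ref{prop:clModelCheckingMembership}.
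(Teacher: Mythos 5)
Your proposal is correct and is essentially identical to the paper's own proof: the same certificate relation $R$ (with the two disjuncts merely written in the opposite order), the same appeal to $\J \subseteq \variablesOf{F}$ for polynomial balance, and the same use of tractability for polynomial decidability. Nothing further is needed.
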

\begin{proof}
	We will show that the complementary problem is in $\NP$ by providing a polynomially balanced, polynomially decidable certificate relation. First of all, $(F,\I)$ is a yes-instance of co-\clPrefModelChecking{$\cl$} iff $\I \not\in \prefModels{\cl}{F}$. Let 
	\begin{align*}
		R = \{((F,\I), \J) \mid  \degree{\cl}(\J,F) < \degree{\cl}(\I,F) \text{ or } \degree{\cl}(\I,F) = \infty \}.
	\end{align*}
	Then $((F,\I), \J) \in R$ iff  there is an interpretation $\J$ such that $\degree{\cl}(\J,F) < \degree{\cl}(\I,F)$ or $\degree{\cl}(\I,F) = \infty$, which is the case exactly when $\I \not\in \prefModels{\cl}{F}$. $R$ is polynomially balanced since we can assume that $\J \subseteq \variablesOf{F}$. Furthermore, $R$ is polynomially decidable, since we can compute $\degree{\cl}(\I,F)$ and $\degree{\cl}(\J,F)$ in polynomial time for tractable choice logics. 
\end{proof}

We observe that \clPrefModelChecking{$\pl$} and \clModelChecking{$\pl$} are identical, since ${\I \in \prefModels{\pl}{F} \iff \I \sat{\pl}{1} F}$. 
Hence, we cannot expect $\coNP$-hardness in general, 
but we 
show the result for all logics where degrees other than $1$ and $\infty$ are obtainable, and thus for
$\qcl, \ccl, \sccl,$ and $\lcl$. %

\begin{proposition} \label{prop:clPrefModelCheckingHardness}
	\clPrefModelChecking{$\cl$} is 
	$\coNP$-complete
	for any tractable choice logic $\cl$ where $\obtainableDegrees{\cl} \neq \{1,\infty\}$.
\end{proposition}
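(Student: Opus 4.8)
The plan is to keep the membership half and concentrate on hardness. Membership in $\coNP$ is already furnished by Proposition~\ref{prop:clPrefModelCheckingMembership}, so only $\coNP$-hardness under the hypothesis $\obtainableDegrees{\cl} \neq \{1,\infty\}$ remains. Since $\degree{\cl}(\I,\top) = 1$ and $\degree{\cl}(\I,\bot) = \infty$ in every choice logic, we always have $\{1,\infty\} \subseteq \obtainableDegrees{\cl}$, so the hypothesis yields a \emph{fixed, finite} degree $m \in \obtainableDegrees{\cl}$ with $m \geq 2$. I would establish hardness by a polynomial-time reduction from $\Unsat$, which is $\coNP$-complete.

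Given a classical formula $\phi$, I would first apply Lemma~\ref{lemma:synthesisLight} to obtain an $\cl$-formula $S_m$ with $\degree{\cl}(\J, S_m) = m$ for \emph{every} interpretation $\J$. As $\cl$ and $m$ are fixed and not part of the input, $S_m$ is a formula of constant size; by renaming variables I may assume it shares no variable with $\phi$. I then pick a fresh variable $z$ and output the instance $(F,\emptyset)$ with
\[
F = (\phi \land z) \lor S_m,
\]
a transformation that is clearly linear in $\sizeof{\phi}$.

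For correctness I would compute $\degree{\cl}(\I,F)$ for an arbitrary $\I$ using the fixed semantics of $\land$ and $\lor$: the classical conjunct $\phi \land z$ has degree $1$ exactly when $\I \models \phi$ and $z \in \I$, and $\infty$ otherwise, whereas $S_m$ always contributes $m$, so $\degree{\cl}(\I,F)$ equals $1$ if $\I \models \phi \land z$ and equals $m$ otherwise. If $\phi$ is satisfiable, extending a satisfying assignment by setting $z$ true gives an interpretation of degree $1$, while $z \notin \emptyset$ forces $\degree{\cl}(\emptyset,F) = m > 1$, so $\emptyset$ is not preferred. If $\phi$ is unsatisfiable, then $\phi \land z$ is unsatisfiable and every interpretation (in particular $\emptyset$) receives the constant, finite degree $m$ and is thus preferred. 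Hence $\emptyset \in \prefModels{\cl}{F}$ iff $\phi$ is unsatisfiable.

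The step needing the most care, and the reason for the gadget's exact shape, is ensuring that the interpretation we check is non-optimal \emph{precisely} when $\phi$ is satisfiable. The naive choice $F = \phi \lor S_m$ with $\I = \emptyset$ fails whenever $\emptyset$ already satisfies $\phi$ (for instance if $\phi$ is a tautology), as then $\emptyset$ would have degree $1$ and be preferred despite $\phi$ being satisfiable. Conjoining the fresh variable $z$ and fixing $\I = \emptyset$ (so that $z \notin \I$) guarantees that $\phi \land z$ is unsatisfied by $\I$ regardless of $\phi$, pinning $\I$ to degree $m$ whenever a better interpretation exists. Finally, the requirement $m \geq 2$ is what separates this degree from the optimum $1$; this is exactly where $\obtainableDegrees{\cl} \neq \{1,\infty\}$ is indispensable, explaining why the claim must fail for logics such as $\pl$.
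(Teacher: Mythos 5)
Your proof is correct and follows essentially the same route as the paper: $\coNP$-membership is taken from Proposition~\ref{prop:clPrefModelCheckingMembership}, and hardness is shown by a reduction from \Unsat\ that uses an obtainable degree $m \notin \{1,\infty\}$ to pin the tested interpretation at degree $m$, which is beaten by a degree-$1$ model exactly when the input formula is satisfiable. Your gadget $(\phi \land z) \lor S_m$ with tested interpretation $\emptyset$ is a slightly leaner variant of the paper's $(F \lor G) \land \neg(F \land G)$ with tested interpretation $\{a\}$, since it only needs the constant-degree formula from Lemma~\ref{lemma:synthesisLight} rather than the full synthesis construction of Proposition~\ref{prop:synthesis}.
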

\begin{proof}
	$\coNP$-membership can be inferred from Proposition~\ref{prop:clPrefModelCheckingMembership}.
	We show hardness by providing a reduction from an arbitrary instance $F$ of \Unsat. Let $m \in \obtainableDegrees{\cl} \setminus \{1,\infty\}$, and let $a$ be a variable that does not occur in $F$. By Proposition~\ref{prop:synthesis}, there exists an $\cl$-formula $G$ such that $\I \sat{\cl}{m} G$ if $a \in \I$ and $\I \sat{\cl}{\infty} G$ if $a \not\in \I$ for all interpretations $\I$. Note that the size of $G$ is constant with respect to the size of $F$. We now construct an instance $(F',\{a\})$ of \clPrefModelChecking{$\cl$}, where
	\begin{align*}
	F' = (F \lor G) \land \neg (F \land G).
	\end{align*}
	It remains to prove that $F$ is a yes-instance of \Unsat\ if and only if $(F',\{a\})$ is a yes-instance of \clPrefModelChecking{$\cl$}:
	
	"$\implies$": Assume $F$ is a yes-instance of \Unsat. Then there is no interpretation $\J$ such that $\J \models F$, i.e.\ $\degree{\cl}(\J,F) = \infty$ for all $\J$. Since $\degree{\cl}(\{a\},G) = m$, we have that $\degree{\cl}(\{a\},F') = m$. Indeed, $F'$ can not be satisfied to a degree lower than $m$ since $F$ is unsatisfiable and since $m$ is the lowest degree with which $G$ can be satisfied. Thus, $\{a\} \in \prefModels{\cl}{F'}$.
	
	"$\impliedby$": We proceed by contrapositive. Assume $F$ is a no-instance of \Unsat. Then there is an interpretation $\J$ such that $\J \models F$. Because $a$ does not occur in $F$, we can assume that $a \not\in \J$, and therefore $\J \sat{\cl}{\infty} G$. Thus, $\J \sat{\cl}{1} (F \lor G)$ and $\J \sat{\cl}{1} \neg (F \land G)$, which implies that  $\J \sat{\cl}{1} F'$. Recall that $\{a\} \sat{\cl}{m} G$. We distinguish two cases:
	\begin{enumerate}
		\item $\{a\} \models F$. Then $\{a\} \sat{\cl}{1} (F \lor G)$, $\{a\} \sat{\cl}{\infty} \neg (F \land G)$, and therefore $\{a\} \sat{\cl}{\infty} F'$.
		\item $\{a\} \not\models F$.  Then $\{a\} \sat{\cl}{m} (F \lor H)$, $\{a\} \sat{\cl}{1} \neg (F \land H)$, and therefore $\{a\} \sat{\cl}{m} F'$.
	\end{enumerate}
	In any case, $\degree{\cl}(\{a\},F') > \degree{\cl}(\J,F')$ which implies that $\{a\} \not\in \prefModels{\cl}{F'}$.
\end{proof}

We turn to \clPrefModelSAT{$\cl$} %
and first
give an upper bound for the optionality of choice logic formulas relative to their size. 
In the following, $\sizeof{F}$ denotes the total number of variables occurrences in $F$, e.g.\ $\sizeof{(x \land x \land y)} = 3$.

\begin{lemma} \label{lemma:optionalitySizeBound}
	Let $\cl$ be a choice logic. Then, for every $\cl$-formula $F$ it holds that $\opt{\cl}(F) < 2^{(\sizeof{F}^2)}$.
\end{lemma}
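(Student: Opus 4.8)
The plan is to prove the bound by structural induction on $F$, the growth of optionality across each connective being controlled by the rules of Definition~\ref{def:clOptionality} --- in particular by the constraint $\optConn{\cl}{\generalConn}(k,\ell) \leq (k+1)\cdot(\ell+1)$ for choice connectives. Throughout I would use the elementary fact that every $\cl$-formula contains at least one variable occurrence, so $\sizeof{G} \geq 1$ for every subformula $G$; this will be the lever that makes the strict inequality go through.

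For the base case $F = a$ with $a \in \universe$, we have $\sizeof{F} = 1$ and $\opt{\cl}(a) = 1 < 2 = 2^{(\sizeof{F}^2)}$. The two easy step cases are negation and the classical binary connectives. For $\neg G$ we have $\opt{\cl}(\neg G) = 1$, which is trivially below $2^{(\sizeof{\neg G}^2)} \geq 2$. For $F = G \land H$ or $F = G \lor H$ we have $\opt{\cl}(F) = \max(\opt{\cl}(G),\opt{\cl}(H))$ and $\sizeof{F} = \sizeof{G} + \sizeof{H}$, so both $\sizeof{G}^2$ and $\sizeof{H}^2$ are at most $\sizeof{F}^2$; the induction hypothesis then gives $\opt{\cl}(F) < 2^{(\sizeof{F}^2)}$ immediately.

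The main obstacle is the choice connective case $F = G \generalConn H$, where optionality grows \emph{multiplicatively} rather than by a maximum. Here I would first sharpen the induction hypothesis $\opt{\cl}(G) < 2^{(\sizeof{G}^2)}$ into $\opt{\cl}(G) + 1 \leq 2^{(\sizeof{G}^2)}$, which is legitimate because optionalities are natural numbers, and likewise for $H$. Combining these with the bound $\opt{\cl}(F) \leq (\opt{\cl}(G)+1)\cdot(\opt{\cl}(H)+1)$ yields $\opt{\cl}(F) \leq 2^{(\sizeof{G}^2 + \sizeof{H}^2)}$. The crucial final step is to observe that this exponent is \emph{strictly} smaller than $\sizeof{F}^2 = (\sizeof{G} + \sizeof{H})^2 = \sizeof{G}^2 + 2\,\sizeof{G}\,\sizeof{H} + \sizeof{H}^2$: the gap is exactly the cross term $2\,\sizeof{G}\,\sizeof{H}$, which is at least $2$ since $\sizeof{G}, \sizeof{H} \geq 1$.

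This strictly positive cross term is what does all the real work: it simultaneously absorbs the rounding incurred in passing from $\opt{\cl}(G) < 2^{(\sizeof{G}^2)}$ to $\opt{\cl}(G)+1 \leq 2^{(\sizeof{G}^2)}$ and re-establishes the strict inequality, so that $\opt{\cl}(F) \leq 2^{(\sizeof{G}^2+\sizeof{H}^2)} < 2^{(\sizeof{F}^2)}$. I expect no difficulty beyond bookkeeping in the classical cases; the only subtle point is confirming that the square exponent (rather than a linear one) is genuinely needed, precisely to create a quadratic slack term large enough to dominate the multiplicative blow-up of nested choice connectives.
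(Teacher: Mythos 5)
Your proof is correct and follows essentially the same route as the paper: structural induction, with the choice-connective case handled by tightening the strict inequality to $\opt{\cl}(G)+1 \leq 2^{(\sizeof{G}^2)}$, multiplying the bounds, and recovering strictness from the cross term $2\,\sizeof{G}\,\sizeof{H} \geq 2$ in $(\sizeof{G}+\sizeof{H})^2$. No gaps.
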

\begin{proof}
	By structural induction over $\formulas{\cl}$.
	\begin{itemize}
		
		\item Base case. $F = a$, where $a$ is a propositional variable. Then $\sizeof{F} = 1$ and $\opt{\cl}(F) = 1 < 2^{(\sizeof{F}^2)}$.
		
		\item Step case. As our I.H., let $G$ and $H$ be $\cl$-formulas such that $\opt{\cl}(G) < 2^{(\sizeof{G}^2)}$ and $\opt{\cl}(H) < 2^{(\sizeof{H}^2)}$. We distinguish the following cases:
		\begin{enumerate}
			\item $F = (\neg G)$. Then $\sizeof{F} = \sizeof{G} \geq 1$ and $\opt{\cl}(F) = 1 < 2^{(\sizeof{F}^2)}$.
			\item $F = (G \land H)$ or $F = (G \lor H)$. Then $\sizeof{F} = \sizeof{G} + \sizeof{H}$ and 
			\begin{align*}
			\opt{\cl}(F) 
			& = \max(\opt{\cl}(G),\opt{\cl}(H)) \\
			& < \max(2^{(\sizeof{G}^2)},2^{(\sizeof{H}^2)}) 
			< 2^{(\sizeof{F}^2)}.
			\end{align*}
			\item $F = (G \generalConn H)$, where $\generalConn \in \connectives{\cl}$. Then $\sizeof{F} = \sizeof{G} + \sizeof{H}$. Of course, $\opt{\cl}(G) < 2^{(\sizeof{G}^2)}$ is the same as $\opt{\cl}(G) \leq 2^{(\sizeof{G}^2)} - 1$. Likewise for~$H$. Thus, 
			\begin{align*}
			\opt{\cl}(F) 
			& \leq (\opt{\cl}(G) + 1) \cdot (\opt{\cl}(H) + 1) \\
			& \leq ((2^{(\sizeof{G}^2)} - 1) + 1) \cdot ((2^{(\sizeof{H}^2)} - 1) + 1) \\
			& = 2^{(\sizeof{G}^2)} \cdot 2^{(\sizeof{H}^2)} 
			= 2^{(\sizeof{G}^2) + (\sizeof{H}^2)} \\
			& < 2^{((\sizeof{G} + \sizeof{H})^2)} 
			= 2^{(\sizeof{F}^2)}. \qedhere
			\end{align*}
		\end{enumerate}
	\end{itemize}
\end{proof}

This is likely not a tight bound, but it is good enough for our purposes.
Recall that $\phDelta{2}$ is the class of decision problems that can be solved in polynomial time on a deterministic Turing machine with access to an arbitrary number of $\NP$-oracle calls. $\phTheta{2}$ is defined analogously %
but only a logarithmic number of $\NP$-oracle calls is permitted.

\begin{proposition} \label{prop:delta2Membership}
	\clPrefModelSAT{$\cl$} is in $\phDelta{2}$ and $\NP$-hard for any tractable choice logic $\cl$.
\end{proposition}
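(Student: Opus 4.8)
The plan is to prove the two claims separately; the $\phDelta{2}$ upper bound is the substantial part, and $\NP$-hardness is a short reduction from \SAT. For membership, the guiding observation is that an interpretation is a preferred model of $F$ exactly when $F$ has a model at all and the interpretation attains the \emph{minimal finite} satisfaction degree $k^\ast$ of $F$. So the plan is to compute $k^\ast$ with the help of an $\NP$ oracle and then make one further oracle call testing whether some interpretation attaining $k^\ast$ also contains $a$. The quantitative ingredient that makes this efficient is that the search space for $k^\ast$ is narrow in bit-length: by Lemma~\ref{lemma:degreeIsBoundedByOptionality} every finite degree of $F$ is at most $\opt{\cl}(F)$, and by Lemma~\ref{lemma:optionalitySizeBound} we have $\opt{\cl}(F) < 2^{(\sizeof{F}^2)}$, so $k^\ast$ ranges over $\{1,\dots,\opt{\cl}(F)\}$, an exponentially large set whose elements are encoded in only $\sizeof{F}^2$ bits. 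Because $\cl$ is tractable, $\opt{\cl}(F)$ is itself computable in polynomial time.

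Concretely, I would use \clSAT{$\cl$}, shown $\NP$-complete in Proposition~\ref{prop:clSatMembership}, as the oracle. First I query whether there is an $\I$ with $\degree{\cl}(\I,F) \leq \opt{\cl}(F)$; a negative answer means $F$ is unsatisfiable, hence $\prefModels{\cl}{F} = \emptyset$ and the algorithm rejects. Otherwise I binary-search for the least $k$ such that $(F,k)$ is a yes-instance of \clSAT{$\cl$}; this least value is $k^\ast$, and the search costs $O(\sizeof{F}^2)$ oracle calls. Finally, note that $\degree{\cl}(\I, F \land a)$ equals $\degree{\cl}(\I,F)$ when $a \in \I$ and equals $\infty$ otherwise, so $(F \land a, k^\ast)$ is a yes-instance of \clSAT{$\cl$} exactly when some preferred model of $F$ contains $a$. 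One last oracle call thus decides the problem. In total the procedure runs in deterministic polynomial time with polynomially many $\NP$-oracle calls, placing \clPrefModelSAT{$\cl$} in $\phDelta{2}$.

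For $\NP$-hardness I would reduce from \SAT. Given a classical formula $F$, pick a fresh variable $a \notin \variablesOf{F}$ and output the instance $(F \land a, a)$. Since $F \land a$ is a classical formula, every model satisfies it to degree $1$ and every non-model to degree $\infty$, so the preferred models of $F \land a$ are exactly its classical models. Hence a preferred model containing $a$ exists iff $F \land a$ is satisfiable iff $F$ is satisfiable, and the reduction is clearly polynomial.

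The main obstacle, and the place where the hypotheses are genuinely used, is guaranteeing that the binary search terminates after only polynomially many oracle calls. This hinges on the exponential optionality bound of Lemma~\ref{lemma:optionalitySizeBound} together with the degree-boundedness of Lemma~\ref{lemma:degreeIsBoundedByOptionality}; without such a bound the minimal degree could require super-polynomially many bits and the argument would only yield membership in a larger class. A secondary point to verify is that the auxiliary query ``is there an $\I$ with $a \in \I$ and $\degree{\cl}(\I,F) \leq k^\ast$'' is genuinely in $\NP$ (certificate $\I$, checkable in polynomial time via \clModelChecking{$\cl$}, which is in $\PolyTime$ by Proposition~\ref{prop:clModelCheckingMembership}); the reformulation as the single instance $(F \land a, k^\ast)$ of \clSAT{$\cl$} is precisely what sidesteps this and lets us reuse the one oracle throughout.
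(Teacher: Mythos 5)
Your proposal is correct and follows essentially the same route as the paper: $\NP$-hardness via the identical reduction $F \mapsto (F\land a, a)$, and $\phDelta{2}$-membership via a binary search for the minimal satisfaction degree using a \clSAT{$\cl$} oracle, with the polynomial bound on the number of oracle calls coming from Lemma~\ref{lemma:optionalitySizeBound}. The only (harmless) difference is in the final step: the paper runs a second binary search on $F$ with $a$ replaced by $\top$ and compares the two minima, whereas you make a single additional query on $(F\land a, k^\ast)$ --- both correctly isolate the interpretations containing $a$.
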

\begin{proof}
	$\phDelta{2}$-membership: let $(F,x)$ be an instance of \clPrefModelSAT{$\cl$}. We provide a decision procedure which runs in polynomial time with respect to $\sizeof{F}$, makes $\bigO(\sizeof{F}^2)$ calls to an $\NP$-oracle, and determines whether $(F,x)$ is a yes-instance of \clPrefModelSAT{$\cl$}:
	\begin{enumerate}
		
		\item Construct the formula $F'$ by replacing every occurrence of $x$ in $F$ by the tautology $\top$. Observe that $F$ can be satisfied to a degree of $k$ by some interpretation containing $x$ if and only if $F'$ can be satisfied to the degree of $k$ by any interpretation. Also note that $\sizeof{F'} \in \bigO(\sizeof{F})$ and $\opt{\cl}(F') = \opt{\cl}(F)$. 
		
		\item Conduct a binary search over $(1,\ldots,\opt{\cl}(F),\infty)$. In each step of the binary search, we call an $\NP$-oracle that decides \clSAT{$\cl$} to check whether there is an interpretation $\I$ such that $\degree{\cl}(\I,F) \leq k$, where $k$ is the mid-point of the current step in the binary search. In the end, we will find the minimum $k$ such that $\degree{\cl}(\J,F) = k$ for some $\J$. By Lemma~\ref{lemma:optionalitySizeBound}, $\opt{\cl}(F)~<~2^{(\sizeof{F}^2)}$. Since binary search runs in logarithmic time, we require at most $\bigO(\log(\opt{\cl}(F))) = \bigO(\log(2^{(\sizeof{F}^2)})) = \bigO(\sizeof{F}^2)$ oracle calls.
		
		\item Conduct a binary search over $(1,\ldots,\opt{\cl}(F'),\infty)$ to find the minimum $k'$ such that $\degree{\cl}(\J,F') = k'$ for some $\J$. As before, this requires $\bigO(\sizeof{F'}^2) = \bigO(\sizeof{F}^2)$ $\NP$-oracle calls.
		
		\item If $k < k'$ we have a no-instance and if $k = k'$ we have a yes-instance. Note that it can not be that $k > k'$.
	\end{enumerate}
	
	$\NP$-hardness: let $F$ be an arbitrary instance of \SAT. We then construct an instance $(F',a)$ of \clPrefModelSAT{$\cl$}, where $a$ does not occur in $F$, and $F' = F \land a$. Since $F$ and $F'$ are classical formulas,  and since $a$ does not occur in $F$, we have that $F$ is a yes-instance of \SAT\
	\begin{align*}
	& \iff \text{there is an interpretation $\I$ such that } \I \models F \\
	& \iff \text{there is an interpretation $\I$ such that } \I \cup \{a\} \models F' \\
	& \iff \I \cup \{a\}  \in \prefModels{\cl}{F'} \\
	& \iff (F',a) \text{ is a yes-instance of \clPrefModelSAT{$\cl$}}. \qedhere
	\end{align*}
\end{proof}

These are tight bounds in the sense that there are choice logics for which 
\clPrefModelSAT{$\cl$} is $\phDelta{2}$-complete (Proposition~\ref{prop:delta2CompletenessLCL})
and choice logics for which it is $\NP$-complete (just take $\cl = \pl$).
However, there are also choice logics for which the complexity lies between these two classes. The key is to restrict optionality.

\begin{proposition} \label{prop:theta2Membership}
	\clPrefModelSAT{$\cl$} is in $\phTheta{2}$ for all tractable choice logics~$\cl$ in which for some constant $c$ and all $\cl$-formulas $F$ it holds that $\opt{\cl}(F) \in \bigO(\sizeof{F}^c)$.
\end{proposition}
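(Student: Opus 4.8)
The plan is to reuse, essentially verbatim, the decision procedure developed for $\phDelta{2}$-membership in Proposition~\ref{prop:delta2Membership}, and then to observe that the extra hypothesis on optionality shrinks the number of $\NP$-oracle calls from quadratic down to logarithmic. Concretely, given an instance $(F,a)$, I would first form $F'$ by replacing every occurrence of $a$ in $F$ with $\top$, so that $F$ has a model containing $a$ of degree $k$ if and only if $F'$ has some model of degree $k$; as already noted in Proposition~\ref{prop:delta2Membership}, $\opt{\cl}(F') = \opt{\cl}(F)$ and $\sizeof{F'} \in \bigO(\sizeof{F})$. I then run two binary searches, one over the degrees $(1,\ldots,\opt{\cl}(F),\infty)$ and one over $(1,\ldots,\opt{\cl}(F'),\infty)$, in each step invoking an $\NP$-oracle for \clSAT{$\cl$} to test whether a degree at most $k$ is achievable. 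These searches return the minimum achievable degrees $k$ (for $F$) and $k'$ (for $F'$), and, exactly as before, $(F,a)$ is a yes-instance precisely when $k = k'$.

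The only point that differs from Proposition~\ref{prop:delta2Membership} is the oracle-call count. A binary search over a range of $N+1$ values costs $\bigO(\log N)$ oracle calls. Under the hypothesis $\opt{\cl}(F) \in \bigO(\sizeof{F}^c)$ for a fixed constant $c$, I obtain $\log(\opt{\cl}(F)) \in \bigO(c \log \sizeof{F}) = \bigO(\log \sizeof{F})$, and the same estimate holds for $F'$ because $\opt{\cl}(F') = \opt{\cl}(F)$. Hence the two searches together make only $\bigO(\log \sizeof{F})$ oracle calls, which is logarithmic in the size of the input, placing the whole procedure in $\phTheta{2}$.

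I do not anticipate a genuine obstacle: correctness is inherited unchanged from Proposition~\ref{prop:delta2Membership}, and the complexity claim collapses to the elementary estimate $\log(\sizeof{F}^c) = \bigO(\log \sizeof{F})$. The one point I would be careful to state is that the polynomial optionality bound is assumed for \emph{all} $\cl$-formulas, so it applies to $F'$ as well; this makes both binary searches simultaneously logarithmic and is immediate here since $\opt{\cl}(F') = \opt{\cl}(F)$.
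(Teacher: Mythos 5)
Your proposal is correct and follows exactly the paper's own argument: the paper's proof of Proposition~\ref{prop:theta2Membership} simply reuses the procedure from Proposition~\ref{prop:delta2Membership} and observes that the binary searches now need only $\bigO(\log(\opt{\cl}(F))) = \bigO(\log(\sizeof{F}^c)) = \bigO(\log(\sizeof{F}))$ oracle calls. Your additional remark that the bound applies to $F'$ as well (via $\opt{\cl}(F') = \opt{\cl}(F)$) is a nice explicit touch that the paper leaves implicit.
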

\begin{proof}
	Analogous to the proof of Proposition~\ref{prop:delta2Membership}, except for the crucial difference that we require at most $\bigO(\log(\opt{\cl}(F))) = \bigO(\log(\sizeof{F}^c)) = \bigO(\log(\sizeof{F}))$ oracle calls.
\end{proof}

This
implies that \clPrefModelSAT{$\cl$} is in $\phTheta{2}$ for $\cl \in \{\qcl, \ccl, \sccl\}$. 
In fact, %
these logics are
$\phTheta{2}$-complete. %

\begin{proposition} \label{prop:theta2CompletenessQCLandCCL}
	Let $\cl \in \{\qcl, \ccl, \sccl \}$. Then, \clPrefModelSAT{$\cl$} is $\phTheta{2}$-complete. 
\end{proposition}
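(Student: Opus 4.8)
This direction is immediate from Proposition~\ref{prop:theta2Membership}: I only need to observe that for each of $\qcl$, $\ccl$, and $\sccl$ the optionality of a formula is linear in its size. Every choice connective here contributes additively to optionality, since $\optConn{\qcl}{\qclConn}(k,\ell) = \optConn{\ccl}{\cclConn}(k,\ell) = k+\ell$ and $\optConn{\sccl}{\scclConn}(k,\ell)=k+1$, while $\land$, $\lor$, and $\neg$ never raise it beyond the maximum of their operands. Hence $\opt{\cl}(F) \leq \sizeof{F} \in \bigO(\sizeof{F}^{1})$, so Proposition~\ref{prop:theta2Membership} applies with $c=1$. The substantial remaining task is $\phTheta{2}$-hardness.

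\textbf{Hardness.} The plan is to reduce from a $\phTheta{2}$-complete optimisation problem whose optimum is polynomially bounded and can be located by a logarithmic binary search over $\NP$-queries, which is exactly the structure exploited in Proposition~\ref{prop:delta2Membership}. A convenient source is a $\textsc{MaxSat}$-style problem in decision form: given a hard formula $\psi$, soft constraints $c_1,\dots,c_m$, and a variable $x$, is there an assignment that satisfies $\psi$, violates as few $c_i$ as possible, and sets $x$ to true? First I would record (or cite) that this problem is $\phTheta{2}$-complete. Given such an instance, I will construct a choice-logic formula $F$ over the original variables (plus auxiliary variables) such that $\degree{\cl}(\I,F) = \infty$ exactly when $\I \not\models \psi$, and otherwise $\degree{\cl}(\I,F)$ is a \emph{strictly increasing} function of the objective value of $\I$. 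Then $\prefModels{\cl}{F}$ is precisely the set of optimal assignments, and $(F,x)$ is a yes-instance of \clPrefModelSAT{$\cl$} iff the source instance is a yes-instance.

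\textbf{Encoding the objective as a degree.} The core gadget must turn the number of violated soft constraints into a satisfaction degree, and here the three connectives behave genuinely differently: $\scclConn$ \emph{sums} penalties when left-nested, so a chain $((\cdots(\psi' \scclConn c_1)\scclConn c_2)\cdots\scclConn c_m)$ already yields a degree equal to $1$ plus the number of violated $c_i$; by contrast $\qclConn$ is \emph{positional} (the degree of a $\qclConn$-chain is the index of the first satisfied operand) and $\cclConn$ is governed by the length of the satisfied prefix. To reconcile these I would preprocess the soft part so that in every model the satisfied constraints form a prefix, for instance by using monotone levels with $c_i \to c_{i-1}$. With such threshold structure the quantities ``number violated'', ``first satisfied'', and ``first unsatisfied'' all read off the same breakpoint index, and each can be arranged (using negations or a suitable ordering per connective) to induce a degree strictly monotone in the objective for $\qcl$, $\ccl$, and $\sccl$ alike. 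Framing the source problem directly as this monotone search keeps it $\phTheta{2}$-complete.

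\textbf{Main obstacle.} The delicate part is exactly this alignment. The bulk of the work will be a careful case analysis of $\degree{\cl}$ over the breakpoint structure, carried out once per connective, proving that the minimal degree of $F$ is attained \emph{precisely} by the optimal assignments, that no suboptimal assignment can reach a smaller degree by exploiting the differing aggregation rules, and that the auxiliary variables do not create spurious preferred models that corrupt the $x \in \I$ query. I also need $\psi$ to force degree $\infty$ on its non-models while letting only the soft part drive the degree, and I must keep the construction within the $\bigO(\sizeof{F})$ optionality bound so that the hardness argument meets the membership bound exactly at $\phTheta{2}$ rather than overshooting into $\phDelta{2}$.
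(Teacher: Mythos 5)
Your membership argument is correct and coincides with the paper's: optionality is additive for $\qclConn$, $\cclConn$, $\scclConn$ and non-increasing under the classical connectives, so $\opt{\cl}(F) \leq \sizeof{F}$ and Proposition~\ref{prop:theta2Membership} applies. The hardness half, however, is a plan rather than a proof, and the part you yourself flag as ``the delicate part'' is precisely where the gap lies. Your source problem (satisfy hard $\psi$, minimize the number of violated soft constraints, ask whether $x$ is true in some optimum) is only asserted to be $\phTheta{2}$-complete, and your central gadget is only constructed for one of the three logics: the left-nested $\scclConn$-chain does indeed accumulate $1$ plus the number of violated $c_i$ (arguably a cleaner reduction for $\sccl$ than the paper's), but for $\qclConn$ the degree of a chain is the \emph{position} of the first satisfied operand and for $\cclConn$ it is governed by the satisfied \emph{prefix}, neither of which is a count. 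Bridging this requires either polynomial-size threshold formulas expressing ``at least $i$ of $c_1,\dots,c_m$ hold'' (possible via Valiant-style monotone formulas, but nontrivial and not provided, and with auxiliary variables you must separately rule out spurious preferred models), or else reformulating the source problem so that the satisfied constraints already form a prefix --- but replacing $c_i$ by $\bigwedge_{j\leq i} c_j$ changes the objective from ``number satisfied'' to ``length of the satisfied prefix,'' a different problem whose $\phTheta{2}$-hardness you would then have to establish from scratch. As written, neither route is carried out.

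The paper avoids both difficulties by reducing from \LogLexMaxSat, where the ordering involves only $n \leq \log(\sizeof{F})$ variables. This logarithmic bound is the key trick you are missing: it lets one explicitly enumerate all $2^n \leq \sizeof{F}$ interpretations $\J_1, \dots, \J_{2^n}$ over the ordered variables as mutually exclusive characteristic conjunctions $A_i$, so that $F \land (A_1 \qclConn \cdots \qclConn A_{2^n})$ has degree exactly the lexicographic rank of a model (handling $\qcl$ positionally), and the prefix disjunctions $C_i = \bigvee_{j=1}^{2^n-(i-1)} A_j$ give the required prefix structure for free (handling $\ccl$ and, via left-nesting, $\sccl$), all within polynomial size and with no auxiliary variables. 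I would suggest either adopting that reduction, or, if you want to keep your MaxSat route, (i) supplying a citation or proof for the $\phTheta{2}$-completeness of your source problem and (ii) writing out the threshold/prefix gadget and its degree analysis for $\qcl$ and $\ccl$ explicitly, since that is where the entire content of the hardness proof resides.
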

\begin{proof} %
	$\phTheta{2}$-membership is established in Proposition~\ref{prop:theta2Membership}. For $\phTheta{2}$-hardness we provide a reduction from \LogLexMaxSat\ \citep{woltran2018hard} which is known to be $\phTheta{2}$-complete.
	Consider an instance of \LogLexMaxSat, where, given a $\pl$-formula $F$ and an ordering ${x_1 > \cdots > x_n}$ over $n$ of the variables in $F$ with ${n \leq \log(\sizeof{F})}$, we ask whether $x_n$ is true in the lexicographically largest interpretation $\J \subseteq \{x_1, \ldots, x_n\}$ that can be extended to a model of $F$. 
	
	We construct an instance $(F',x_n)$ of \clPrefModelSAT{$\qcl$} as follows: Let $\J_i$ be the lexicographically $i$-th largest interpretation over $x_1 > \cdots > x_n$. For example, $\J_1 = \{x_1,\ldots,x_n\}$, $\J_2 = \{x_1,\ldots,x_{n-1}\}$, and $\J_{(2^n)} = \emptyset$. We characterize each of those interpretations by a formula, namely
	\begin{align*}
	A_i = \Big( \bigwedge_{x \in \J_i} x \Big) \land \Big( \bigwedge_{x \in \{x_1,\ldots,x_n\} \setminus \J_i} \neg x \Big).
	\end{align*}
	Then, for any interpretation $\I$, we have that $\I \models A_i \iff \I \cap \{x_1,\ldots,x_n\}= \J_i$. Now let
	\begin{align*}
	F' = F \land (A_1 \qclConn A_2 \qclConn \cdots \qclConn A_{(2^n)}).
	\end{align*}
	Observe that this construction is polynomial in $\sizeof{F}$, as $n \leq \log(\sizeof{F})$, and therefore $2^n \leq \sizeof{F}$. It remains to show that $(F,(x_1,\ldots,x_n))$ is a yes-instance of \LogLexMaxSat\ iff $(F',x_n)$ is a yes-instance of \clPrefModelSAT{$\qcl$}. 
	
	"$\implies$": Let $(F,(x_1,\ldots,x_n))$ be a yes-instance of \LogLexMaxSat. Then there exists an interpretation $\I$ such that $x_n \in \I$, $\I \models F$, and such that $\J_k = \I \cap \{x_1,\ldots,x_n\}$ is the lexicographically largest interpretation over $x_1 > \cdots > x_n$ that can be extended to a model of $F$. Observe that $\I \models A_k$, but $\I \not\models A_r$ for any $r \neq k$. Therefore, by the semantics of ordered disjunction in $\qcl$, $\degree{\qcl}(\I,F') = k$. Let $\I'$ be any interpretation other than $\I$. If $\I' \not\models F$, then $\degree{\qcl}(\I',F') = \infty$. If $\I' \models F$, then it can not be that  $\J_{k'} = \I' \cap \{x_1,\ldots,x_n\}$ is lexicographically larger than $\J_k$ with respect to $x_1 > \cdots > x_n$. Thus, $k \leq k'$. By the same reasoning as above, $\degree{\qcl}(\I',F') = k'$. This means that there is no interpretation that satisfies $F'$ to a smaller degree than $\I$, i.e.\ $\I \in \prefModels{\qcl}{F'}$. Since also $x_n \in \I$, we can conclude that $(F',x_n)$ is a yes-instance of \clPrefModelSAT{$\qcl$}.
	
	"$\impliedby$": Let $(F',x_n)$ be a yes-instance of \clPrefModelSAT{$\qcl$}. Then there is an interpretation $\I$ such that $x_n \in \I$ and $\I \in \prefModels{\qcl}{F'}$. By the construction of $F'$, we have that $\I \models F$. Towards a contradiction, assume there is an interpretation $\I'$ such that $\I' \models F$, and such that $\J_{k'} = \I' \cap \{x_1,\ldots,x_n\}$ is lexicographically larger than $\J_k = \I \cap \{x_1,\ldots,x_n\}$ with respect to $x_1 > \cdots > x_n$. Then $k' < k$. But by the same argument as in the only-if-direction, we can conclude that $\degree{\qcl}(\I,F') = k$ and $\degree{\qcl}(\I',F') = k'$, i.e.\ $\degree{\qcl}(\I,F') < \degree{\qcl}(\I',F')$. But then $\I$ is not a preferred model of $F'$. Contradiction. This means that $\J_k$ is the lexicographically largest interpretation with respect to $(x_1,\ldots,x_n)$ that can be extended to a model of $F$. Therefore, $(F,(x_1,\ldots,x_n))$ is a yes-instance of \LogLexMaxSat.
	
	The proof for \clPrefModelSAT{$\ccl$} is similar. Again, we provide a reduction from \LogLexMaxSat. Let $(F,(x_1,\ldots,x_n))$ be an arbitrary instance of \LogLexMaxSat. We construct an instance $(F',x_n)$ of \clPrefModelSAT{$\ccl$}: as above, $A_i$ is the formula that characterizes the lexicographically $i$-th largest interpretation $\J_i$ over $x_1 > \cdots > x_n$. We further construct 
	\begin{align*}
	C_i = \bigvee_{j=1}^{2^n-(i-1)} A_j.
	\end{align*}
	for every $1 \leq i \leq 2^n$. Then $\J_i \models C_j$ for $1 \leq j \leq  2^n-(i-1)$, and $\J_i \not\models C_j$ for $j > 2^n-(i-1)$. For example, $\J_1$ satisfies $C_1$,\ldots,$C_{(2^n)}$, $\J_2$ satisfies $C_1$,\ldots,$C_{(2^n-1)}$ but not $C_{(2^n)}$, and $\J_{(2^n)}$ satisfies only $C_1$. Let 
	\begin{align*}
	F' = F \land (C_1 \cclConn C_2 \cclConn \cdots \cclConn C_{(2^n)}).
	\end{align*}
	This construction is still polynomial in $\sizeof{F}$: Recall that $n \leq \log(\sizeof{F})$, and therefore $2^n \leq \sizeof{F}$. For every $1 \leq i \leq 2^n$ we have that $\sizeof{A_i} \leq \log(\sizeof{F})$, and thus $\sizeof{C_i} \leq \log(\sizeof{F}) \cdot \sizeof{F}$. This means that $\sizeof{F'} \leq \sizeof{F} + \log(\sizeof{F}) \cdot \sizeof{F}^2$. Also note that, by the semantics of ordered conjunction in $\ccl$, we have that $\degree{\cl}(\J_i, C_1 \cclConn C_2 \cclConn \cdots \cclConn C_{(2^n)}) = i$ for all $1 \leq i \leq 2^n$. Thus, by the same argument as for $\qcl$, we can conclude that $(F,(x_1,\ldots,x_n))$ is a yes-instance of \LogLexMaxSat\ if and only if $(F',x_n)$ is a yes-instance of \clPrefModelSAT{$\ccl$}. 
	
	The proof for $\sccl$ is analogous to $\ccl$, except that we must construct 
	\begin{align*}
	F' = F \land ((((C_1 \scclConn C_2) \scclConn C_3 ) \scclConn \cdots ) \scclConn C_{(2^n)})
	\end{align*}
	since $\scclConn$ is not associative.
\end{proof}

In contrast, $\lcl$ allows to represent the lexicographic order with an exponentially smaller formula
than \qcl, \ccl, and \sccl. 

\begin{lemma} \label{lemma:encodeLexOrderingInLCL}
	Let $x_1 > \cdots > x_n$ be an ordering over $n$ propositional variables. Let $\I_k \subseteq \{x_1, \ldots, x_n\}$ be the lexicographically $k$-th largest interpretation over this ordering, and let $F_n = (x_1 \lclConn (x_2 \lclConn (\cdots (x_{n-1} \lclConn x_n))))$. Then
	\begin{align*}
	\degree{\lcl}(\I_k,F_n) = \begin{cases}
	k & \text{if } k < 2^n \\
	\infty & \text{if } k = 2^n
	\end{cases}
	\end{align*}
\end{lemma}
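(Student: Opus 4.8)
The plan is to prove the claim by induction on $n$, computing $\degree{\lcl}(\I_k, F_n)$ directly from the recursive semantics of $\lclConn$ given in Definition~\ref{def:lcl}. The key preliminary observations are that $\opt{\lcl}(x_i) = 1$ for each variable, and that by the optionality recursion $\opt{\lcl}(F_n) = 2^n - 1$; the latter follows because $\opt{\lcl}(F \lclConn G) = (k+1)(\ell+1) - 1$, so with $\opt{\lcl}(x_n) = 1$ one gets $\opt{\lcl}(x_{n-1} \lclConn x_n) = 3 = 2^2 - 1$, and inductively $\opt{\lcl}(F_n) = (1+1) \cdot (\opt{\lcl}(F_{n-1}) + 1) - 1 = 2 \cdot 2^{n-1} - 1 = 2^n - 1$. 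This matches the statement, since the $2^n$ interpretations over $x_1, \dots, x_n$ must receive the $2^n - 1$ finite degrees $1, \dots, 2^n - 1$ together with $\infty$, which is exactly assigned to $\I_{2^n} = \emptyset$.

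For the base case $n = 1$ we have $F_1 = x_1$, and $\degree{\lcl}(\{x_1\}, x_1) = 1$, $\degree{\lcl}(\emptyset, x_1) = \infty$, matching $\I_1 = \{x_1\}$ (degree $1 = k$) and $\I_2 = \emptyset$ (degree $\infty$, with $k = 2 = 2^1$). For the step case, I would write $F_n = x_1 \lclConn F_{n-1}$, where $F_{n-1} = (x_2 \lclConn (\cdots \lclConn x_n))$ is the analogous formula over the $n-1$ variables $x_2 > \cdots > x_n$. The crucial bookkeeping is to split any interpretation $\I_k$ over $x_1, \dots, x_n$ according to whether $x_1 \in \I_k$. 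Since $x_1$ is the most significant variable in the lexicographic order, the top half of the order ($1 \le k \le 2^{n-1}$) consists exactly of the interpretations containing $x_1$, and the bottom half ($2^{n-1} < k \le 2^n$) of those not containing $x_1$; moreover, restricting to $x_2, \dots, x_n$, the interpretation $\I_k$ projects onto the lexicographically $j$-th largest interpretation over the smaller ordering, where $j = k$ in the top half and $j = k - 2^{n-1}$ in the bottom half.

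I would then apply the induction hypothesis to $F_{n-1}$ (whose optionality is $\ell = 2^{n-1} - 1$ and whose inner variable count gives the $j$-indexed degrees) and plug the resulting $m = \degree{\lcl}(\I_k, x_1) \in \{1, \infty\}$ and $n$-value $= \degree{\lcl}(\I_k, F_{n-1})$ into the three finite cases of Definition~\ref{def:lcl}, with $k = \opt{\lcl}(x_1) = 1$, $\ell = \opt{\lcl}(F_{n-1}) = 2^{n-1} - 1$. Concretely: if $x_1 \in \I_k$ (so $m = 1$) and $F_{n-1}$ is satisfied to finite degree $j = k$, the first case gives $(m - 1)\ell + j = j = k$; if $x_1 \in \I_k$ but $F_{n-1}$ is unsatisfied, the second case gives $k \cdot \ell + m = (2^{n-1}-1) + 1 = 2^{n-1}$, which is correct since that interpretation is $\I_{2^{n-1}} = \{x_1\}$; if $x_1 \notin \I_k$ (so $m = \infty$) and $F_{n-1}$ is satisfied to degree $j = k - 2^{n-1}$, the third case gives $k \cdot \ell + k + j = (2^{n-1}-1) + 1 + (k - 2^{n-1}) = k$; and the final $\infty$ case arises only when both $x_1 \notin \I_k$ and $F_{n-1}$ is unsatisfied, i.e. $\I_k = \emptyset = \I_{2^n}$.

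The main obstacle I anticipate is not any single algebraic step but keeping the index arithmetic consistent: verifying that the lexicographic rank $k$ over $n$ variables projects to the correct rank $j$ over $n-1$ variables in each half, and that the boundary interpretations ($\{x_1\}$ ranked $2^{n-1}$, and $\emptyset$ ranked $2^n$) land in the \emph{second} and \emph{fourth} cases of Definition~\ref{def:lcl} respectively rather than the first or third. I would therefore state the projection relationship between ranks as an explicit sublemma before the case analysis, so that each of the four cases of the degree function can be matched cleanly to one of the four situations (top/bottom half $\times$ inner formula satisfied/unsatisfied) without re-deriving the rank correspondence inside each case.
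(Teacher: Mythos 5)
Your proposal is correct and follows essentially the same route as the paper's proof: both first establish $\opt{\lcl}(F_n) = 2^n - 1$ by induction, then induct on $n$ via the decomposition $F_n = x_1 \lclConn F_{n-1}$, split interpretations by whether $x_1$ is present (top half of ranks vs.\ bottom half, with the rank projection $j = k$ resp.\ $j = k - 2^{n-1}$), and match the four resulting situations to the four cases of the $\lclConn$ degree function. The index arithmetic you flag as the main risk is exactly the content of the paper's case analysis, and your computations for the boundary interpretations $\{x_1\}$ and $\emptyset$ agree with it.
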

\begin{proof}
	First, we show $\opt{\lcl}(F_n) = 2^n - 1$ by induction over $n$:
	\begin{itemize}
		\item Base case: $n = 1$. Then $\opt{\lcl}(F_n) = \opt{\lcl}(x_1) = 1 = 2^n - 1$.
		\item Step case: $n > 1$. Then $F_n = (x_1 \lclConn A)$, where $A = (x_2 \lclConn (\cdots (x_{n-1} \lclConn x_n)))$. By our I.H., $\opt{\lcl}(A) = 2^{n-1} - 1$, and thus
		\begin{align*}
		\opt{\lcl}(F_n) 
		& = \opt{\lcl}(x_1 \lclConn A) \\
		& = (\opt{\lcl}(x_1) + 1) \cdot (\opt{\lcl}(A) + 1) - 1 \\
		& = (1 + 1) \cdot ((2^{n-1}-1) + 1) - 1 \\
		& = (2 \cdot 2^{n-1}) - 1 
		= 2^n - 1.
		\end{align*}
	\end{itemize}
	Now we proceed with the main proof, again by induction over $n$:
	\begin{itemize}
		\item Base case: $n = 1$. Then $F_n = x_1$, $\I_1 = \{x_1\}$, and $\I_2 = \emptyset$. Clearly, $\degree{\lcl}(\I_1,F_n) = 1$ and $\degree{\lcl}(\I_2,F_n) = \infty$, as required.
		\item Step case: $n > 1$. Then $F_n = (x_1 \lclConn A)$, where $A = (x_2 \lclConn (\cdots (x_{n-1} \lclConn x_n)))$. Let $\J_\ell$ be the lexicographically $\ell$-th largest interpretation over $x_2 > \cdots > x_n$. By our I.H., $\degree{\lcl}(\J_\ell,A) = \ell$ if $\ell < 2^{n-1}$, and $\degree{\lcl}(\J_\ell,A) = \infty$ if $\ell = 2^{n-1}$. We can obtain the $k$-th largest interpretation $\I_k$ over $x_1 > \cdots > x_n$ as follows:
		\begin{enumerate}
			\item If $k \leq 2^{n-1}$, then $\I_k = \J_k \cup \{x_1\}$. Therefore, $\degree{\lcl}(\I_k,x_1) = 1$. There are two cases:
			\begin{enumerate}
				\item $\degree{\lcl}(\J_k,A) < \infty$. Then $\degree{\cl}(\I_k,F_n) = \degree{\cl}(\J_k,A) = k$.
				\item $\degree{\lcl}(\J_k,A) = \infty$. Then $k = 2^{n-1}$. Thus, \\ $\degree{\cl}(\I_k,F_n) = \opt{\lcl}(A) + 1 = (2^{n-1} - 1) + 1 = 2^{n-1} = k$.
			\end{enumerate}
			\item If $k > 2^{n-1}$, then $\I_{k} = \J_{(k - 2^{n-1})}$. Therefore, $\degree{\lcl}(\I_k,x_1) = \infty$. There are two cases:
			\begin{enumerate}
				\item $\degree{\lcl}(\J_{(k - 2^{n-1})},A) < \infty$. Then $\degree{\cl}(\I_k,F_n) = \opt{\lcl}(A) + 1 + \degree{\cl}(\J_{(k - 2^{n-1})},A) = (2^{n-1} - 1) + 1 + (k - 2^{n-1}) = k$.
				\item $\degree{\lcl}(\J_{(k - 2^{n-1})},A) = \infty$. Then $k - 2^{n-1} = 2^{n-1}$, and therefore $k = 2^n$. Furthermore, $\degree{\cl}(\I_k,F_n) = \infty$, as required. \qedhere
			\end{enumerate}
		\end{enumerate}
	\end{itemize}
\end{proof}

\begin{proposition}\label{prop:delta2CompletenessLCL}
	\clPrefModelSAT{$\lcl$} is $\phDelta{2}$-complete. 
\end{proposition}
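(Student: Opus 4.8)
The plan is to get $\phDelta{2}$-membership for free from Proposition~\ref{prop:delta2Membership}, which already establishes membership for every tractable choice logic, and to concentrate all the work on $\phDelta{2}$-hardness. For hardness I would reduce from the \emph{unrestricted} lexicographic maximum satisfiability problem $\LexMaxSat$, which is $\phDelta{2}$-complete \citep{woltran2018hard}: given a $\pl$-formula $F$ and an ordering $x_1 > \cdots > x_n$ over $n$ of its variables (crucially with \emph{no} bound tying $n$ to $\log(\sizeof{F})$), decide whether $x_n$ is true in the lexicographically largest interpretation over $\{x_1,\ldots,x_n\}$ that extends to a model of $F$.

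The key point is that, unlike $\qcl$, $\ccl$, and $\sccl$, the logic $\lcl$ encodes a lexicographic order \emph{succinctly}. Concretely, I would set $F' = F \land F_n$ with $F_n = (x_1 \lclConn (x_2 \lclConn (\cdots (x_{n-1} \lclConn x_n))))$ exactly as in Lemma~\ref{lemma:encodeLexOrderingInLCL}. Since $\sizeof{F_n} = n$, we have $\sizeof{F'} = \sizeof{F} + n$, so the reduction is polynomial (in fact linear) and completely avoids the $2^n$ blow-up that forced the logarithmic restriction in Proposition~\ref{prop:theta2CompletenessQCLandCCL}. This succinctness is precisely what pushes $\lcl$ from $\phTheta{2}$ up to $\phDelta{2}$.

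For correctness I would compute $\degree{\lcl}(\I, F') = \max(\degree{\lcl}(\I, F), \degree{\lcl}(\I, F_n))$. As $F$ is classical, $\degree{\lcl}(\I, F) \in \{1, \infty\}$, and by Lemma~\ref{lemma:nonOccuringAtomsHaveNoInfluence} the second term depends only on $\I \cap \{x_1,\ldots,x_n\}$. Writing $\I_k$ for the lexicographically $k$-th largest such restriction, Lemma~\ref{lemma:encodeLexOrderingInLCL} gives $\degree{\lcl}(\I, F') = k$ whenever $\I \models F$ and $\I \cap \{x_1,\ldots,x_n\} = \I_k$ with $k < 2^n$, and $\degree{\lcl}(\I, F') = \infty$ otherwise. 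Hence the preferred models of $F'$ are exactly the models of $F$ whose restriction to $\{x_1,\ldots,x_n\}$ is the lexicographically largest restriction $\I_{k^*}$ (with $k^* < 2^n$) that extends to a model of $F$. Since the finite degree determines the restriction, all preferred models share the restriction $\I_{k^*}$, so $F'$ has a preferred model containing $x_n$ iff $x_n \in \I_{k^*}$, which is precisely the $\LexMaxSat$ condition.

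The single point that needs care — and the main obstacle — is the degenerate case $k = 2^n$, i.e.\ the restriction $\emptyset$, to which Lemma~\ref{lemma:encodeLexOrderingInLCL} assigns degree $\infty$ rather than a finite value (in contrast to the $A_1 \qclConn \cdots \qclConn A_{(2^n)}$ encoding, where every restriction receives a finite degree). I would argue that this is harmless: $\emptyset$ is the lexicographically largest extendable restriction exactly when no nonempty restriction extends to a model of $F$, in which case every model of $F$ sets all of $x_1,\ldots,x_n$ to false. Then $F'$ has no finite-degree model, so $(F', x_n)$ is a no-instance, while simultaneously $x_n$ is false in the lexicographically largest extendable restriction, making $\LexMaxSat$ a no-instance as well; the two sides agree (and if $F$ is outright unsatisfiable the same conclusion holds trivially). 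Collecting the cases yields the equivalence and completes the $\phDelta{2}$-hardness, hence $\phDelta{2}$-completeness.
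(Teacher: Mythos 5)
Your proposal is correct and follows essentially the same route as the paper: $\phDelta{2}$-membership from Proposition~\ref{prop:delta2Membership} and $\phDelta{2}$-hardness by reducing \LexMaxSat\ to \clPrefModelSAT{$\lcl$} via $F' = F \land (x_1 \lclConn (x_2 \lclConn (\cdots (x_{n-1} \lclConn x_n))))$ and Lemma~\ref{lemma:encodeLexOrderingInLCL}. Your explicit handling of the degenerate all-false restriction (degree $\infty$ at $k = 2^n$) is a detail the paper leaves implicit, and it checks out.
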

\begin{proof}
	$\phDelta{2}$-membership can be inferred from Proposition~\ref{prop:delta2Membership}. We prove $\phDelta{2}$-hardness by providing a reduction from \LexMaxSat\ \citep{woltran2018hard}, where, given a $\pl$-formula $F$ and an ordering ${x_1 > \cdots > x_n}$ over all variables in $F$, we ask whether $x_n$ is true in the lexicographically largest model of $F$. We construct an instance $(F',x_n)$ of \clPrefModelSAT{$\lcl$}, where
	\begin{align*}
	F' = F \land (x_1 \lclConn (x_2 \lclConn (\cdots (x_{n-1} \lclConn x_n)))).
	\end{align*}
	It remains to show that $(F,(x_1,\ldots,x_n))$ is a yes-instance of \LexMaxSat\ if and only if $(F',x_n)$ is a yes-instance of \clPrefModelSAT{$\lcl$}. 
	
	"$\implies$": Let $(F,(x_1,\ldots,x_n))$ be a yes-instance of \LexMaxSat. Then there exists an interpretation $\I$ such that $\I \models F$, $x_n \in \I$, and $\I$ is the lexicographically largest model of $F$ with respect to the ordering $x_1 > \cdots > x_n$. Let $\J$ be any interpretation other than $\I$. If $\J \not\models F$, then $\degree{\lcl}(\J,F') = \infty$, and $\J$ is not a preferred model of $F'$. If $\J \models F$, then $\J$ must be lexicographically smaller than $\I$. By Lemma~\ref{lemma:encodeLexOrderingInLCL}, we can directly infer that $\degree{\lcl}(\I,x_1 \lclConn (\cdots (x_{n-1} \lclConn x_n))) < \degree{\lcl}(\J,x_1 \lclConn (\cdots (x_{n-1} \lclConn x_n)))$, and therefore $\degree{\lcl}(\I,F') < \degree{\lcl}(\J,F')$. This means that $\I \in \prefModels{\lcl}{F'}$. Since also $x_n \in \I$, we have that $(F',x_n)$ is a yes-instance of \clPrefModelSAT{$\lcl$}. 
	
	"$\impliedby$": Let $(F',x_n)$ be a yes-instance of \clPrefModelSAT{$\lcl$}. Then there is an interpretation $\I$ such that $x_n \in \I$ and $\I \in \prefModels{\lcl}{F'}$. Towards a contradiction, assume that there is an interpretation $\J$ such that $\J \models F$, and such that $\J$ is lexicographically larger than $\I$ with respect to $x_1 > \cdots > x_n$. But, by Lemma~\ref{lemma:encodeLexOrderingInLCL}, this means that $\degree{\lcl}(\J,F') < \degree{\lcl}(\I,F')$, which means that $\I \not\in\prefModels{\lcl}{F'}$. Contradiction. Thus, $\I$ is the lexicographically largest model of $F$. Since $x_n \in \I$, we have that $(F,(x_1,\ldots,x_n))$ is a yes-instance of \LexMaxSat.
\end{proof}

Finally, we consider
\clFullEquivChecking{$\cl$}, the problem of deciding whether $A \fullEquiv{\cl} B$ holds for given $\cl$-formulas $A$ and $B$, as well as \clDegreeEquivChecking{$\cl$} and \clStrongEquivChecking{$\cl$} which are defined analogously. In the following result, hardness follows from $\pl$; membership 
of (2) is by our characterizations in 
Section~\ref{sec:strongEquivalence}. %

\begin{proposition}\label{prop:coNPcompletenessOfDegreeEquivalence}
	For any tractable choice logic $\cl$,
	(1)
	\clFullEquivChecking{$\cl$} and \clDegreeEquivChecking{$\cl$} are 
	$\coNP$-complete;
	(2)
	\clStrongEquivChecking{$\cl$} is
	$\coNP$-complete if $\cl$ is
	optionality-ignoring %
	or %
	-differentiating. %
\end{proposition}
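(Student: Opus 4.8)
The plan is to dispatch membership and hardness separately, bundling the three problems where the arguments coincide and invoking the characterizations of Section~\ref{sec:strongEquivalence} for the strong-equivalence part.

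\textbf{Membership, part (1).} I would first place \clDegreeEquivChecking{$\cl$} in $\coNP$ by showing that its complement is in $\NP$: a witness for $A \not\degreeEquiv{\cl} B$ is an interpretation $\I$ with $\degree{\cl}(\I,A) \neq \degree{\cl}(\I,B)$. By Lemma~\ref{lemma:nonOccuringAtomsHaveNoInfluence} it suffices to consider $\I \subseteq \variablesOf{A} \cup \variablesOf{B}$, so this certificate is polynomially balanced, and it is polynomially decidable since, $\cl$ being tractable, both degrees are computable in polynomial time (Proposition~\ref{prop:clModelCheckingMembership}). For \clFullEquivChecking{$\cl$} I would observe that $A \fullEquiv{\cl} B$ is the conjunction of the $\coNP$ condition $A \degreeEquiv{\cl} B$ with the condition $\opt{\cl}(A) = \opt{\cl}(B)$; the latter is decidable in polynomial time, since by Lemma~\ref{lemma:optionalitySizeBound} each optionality fits into $\bigO(\sizeof{\cdot}^2)$ bits and can be evaluated bottom-up using the polynomial-time optionality functions. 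A conjunction of a $\coNP$ problem with a polynomial-time test stays in $\coNP$.

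\textbf{Hardness, part (1).} Here I would use that every classical formula is an $\cl$-formula and that on such formulas the semantics collapses to $\pl$, i.e.\ $\degree{\cl}(\I,F)=1 \iff \I \models F$ (and $\infty$ otherwise), with $\opt{\cl}(F)=1$. Reducing from \Unsat, I would map a classical formula $F$ to the pair $(F,\bot)$. For classical $F$ we have $F \degreeEquiv{\cl} \bot$ iff $\degree{\cl}(\I,F)=\infty$ for all $\I$, i.e.\ iff $F$ is unsatisfiable; and since $\opt{\cl}(F)=\opt{\cl}(\bot)=1$, the very same pair also witnesses $F \fullEquiv{\cl} \bot \iff F$ unsatisfiable. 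As \Unsat\ is $\coNP$-complete, both problems are thereby $\coNP$-hard.

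\textbf{Part (2).} For membership I would simply invoke the characterizations: by Proposition~\ref{prop:strongEquivForOptIgnoring}, if $\cl$ is optionality-ignoring then $A \strongEquiv{\cl} B \iff A \degreeEquiv{\cl} B$, and by Proposition~\ref{prop:strongEquivForOptDifferentiating}, if $\cl$ is optionality-differentiating then $A \strongEquiv{\cl} B \iff A \fullEquiv{\cl} B$; in either case \clStrongEquivChecking{$\cl$} reduces to a problem already shown to lie in $\coNP$ in part (1). For hardness I would reuse the reduction $F \mapsto (F,\bot)$: restricted to classical $F$, the right-hand side of each characterization (degree- respectively full-equivalence with $\bot$) again expresses unsatisfiability of $F$, so $F \strongEquiv{\cl} \bot \iff F$ is unsatisfiable in both sub-cases, yielding $\coNP$-hardness. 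The one point needing a little care, rather than a genuine obstacle, is the full-equivalence membership step: optionalities may be exponentially large, but Lemma~\ref{lemma:optionalitySizeBound} bounds their bit-length polynomially, so the bottom-up computation and final comparison remain polynomial. Everything else follows routinely from tractability, Lemma~\ref{lemma:nonOccuringAtomsHaveNoInfluence}, and the equivalence characterizations.
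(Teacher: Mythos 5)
Your proposal is correct and follows essentially the same route as the paper: a certificate-based $\coNP$-membership argument (the paper folds the optionality comparison directly into the certificate relation for co-\clFullEquivChecking{$\cl$} rather than treating it as a separate polynomial-time conjunct, which is only a cosmetic difference), the same \Unsat-reduction $F \mapsto (F,\bot)$ for hardness, and the characterizations of Propositions~\ref{prop:strongEquivForOptIgnoring} and~\ref{prop:strongEquivForOptDifferentiating} for part~(2). Your explicit appeal to Lemma~\ref{lemma:optionalitySizeBound} to justify that optionalities, though possibly exponential in value, have polynomial bit-length is a detail the paper's proof leaves implicit, and is a welcome addition rather than a deviation.
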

\begin{proof}
	We prove $\NP$-membership of co-\clFullEquivChecking{$\cl$}: Let 
	\begin{align*}
		R = \{((A,B), \I) \mid \degree{\cl}(\I,A) \neq \degree{\cl}(\I,B) \text{ or } \opt{\cl}(A) \neq \opt{\cl}(B) \}.
	\end{align*}
	Clearly, $(A,B)$ is a yes-instance of co-\clFullEquivChecking{$\cl$} iff there is an interpretation $\I$ such that $((A,B),\I) \in R$. Furthermore, $R$ is polynomially balanced, since we can assume that $\I \subseteq (\variablesOf{A} \cup \variablesOf{B})$. $R$ is also polynomially decidable, since $\cl$ is a tractable choice logic. 
	
	As for $\coNP$-hardness of \clFullEquivChecking{$\cl$}, let $F$ be an arbitrary instance of \Unsat. We construct an instance $(A,B)$ of \clFullEquivChecking{$\cl$} with $A = F$ and $B = \bot$. Observe that $\opt{\cl}(A) = \opt{\cl}(B) = 1$, since $F$ is a classical formula. Additionally, $A$ and $B$ are degree-equivalent if and only if $\I \not\models A$ for all $\I$, which is the case if and only if $F = A$ is a yes-instance of \Unsat.
\end{proof}

Table~\ref{table:complexityResults1} and Table~\ref{table:complexityResults2}
summarize our
complexity results 
for tractable choice logics, thus including all specific logics studied so far; a full analysis may be conducted in a similar way by using oracles for the optionality- and satisfaction degree functions. %

\begin{table*}
	\centering
	\begin{tabular}{lccc}
		\toprule
		& $\cl = \pl$ & $\cl \in \{\qcl,\ccl, \sccl\}$ & $\cl = \lcl$ \\
		\midrule
		\clModelChecking{$\cl$} & in $\PolyTime$ & in $\PolyTime$ &  in $\PolyTime$ \\
		\clSAT{$\cl$} & $\NP$-c & $\NP$-c & $\NP$-c \\
		\clPrefModelChecking{$\cl$} & in $\PolyTime$ & $\coNP$-c & $\coNP$-c \\
		\clPrefModelSAT{$\cl$} & $\NP$-c & $\phTheta{2}$-c & $\phDelta{2}$-c \\
		\clFullEquivChecking{$\cl$} & $\coNP$-c  & $\coNP$-c & $\coNP$-c \\
		\clDegreeEquivChecking{$\cl$} & $\coNP$-c  & $\coNP$-c & $\coNP$-c \\
		\clStrongEquivChecking{$\cl$} & $\coNP$-c  & $\coNP$-c & $\coNP$-c \\
		\bottomrule
	\end{tabular}
	\caption{Summary of complexity results 1}
	\label{table:complexityResults1}
\end{table*}
\begin{table*}
	\centering
	\begin{tabular}{lccc}
		\toprule
		& Opt-diff. & Opt-ignor. & Tract. \\
		\midrule
		\clModelChecking{$\cl$} &  in $\PolyTime$ &  in $\PolyTime$ &  in $\PolyTime$ \\
		\clSAT{$\cl$} & $\NP$-c & $\NP$-c & $\NP$-c \\
		\clPrefModelChecking{$\cl$} & in $\coNP$ & in $\coNP$ & in $\coNP$\\
		\clPrefModelSAT{$\cl$} & in $\phDelta{2}$/$\NP$-h & in $\phDelta{2}$/$\NP$-h & in $\phDelta{2}$/$\NP$-h \\
		\clFullEquivChecking{$\cl$} & $\coNP$-c & $\coNP$-c & $\coNP$-c \\
		\clDegreeEquivChecking{$\cl$} & $\coNP$-c & $\coNP$-c & $\coNP$-c \\
		\clStrongEquivChecking{$\cl$} & $\coNP$-c  & $\coNP$-c  & ? \\
		\bottomrule
	\end{tabular}
	
	\caption{Summary of complexity results 2.}
	\label{table:complexityResults2}
\end{table*}

\section{Conclusion} \label{sec:conclusion}

We defined and investigated a general framework for choice logics that captures both $\qcl$ and $\ccl$, but also allows to define new logics %
as examplified via
$\sccl$ and $\lcl$. %
We have shown that strong equivalence %
is interchangeable with degree-equivalence for optionality-ignoring choice logics (e.g.\ $\sccl$) and with full equivalence for optionality-differentiating choice logics (e.g.\ $\qcl$, $\ccl$, $\lcl$). Moreover, the computational complexity of tractable choice logics was investigated in detail. An initial definition of our framework and some further results regardings choice logics can be found in the master thesis of the first author \citep{bernreiter2020thesis}. Moreover, ASP encodings have been provided by the authors \citep{BMWencodingChoiceLogics}. 

Regarding future work, new choice logics may be defined explicitly with concrete use cases in mind. Furthermore, some properties of our framework have yet to be investigated.
This includes a characterization of associativity, general concepts towards normal forms (as examined for $\qcl$ and $\ccl$ by \citet{brewka2004qualitative} and \citet{boudjelida2016conjunctive})
as well as nonmonotonic consequence relations for choice logics and how they can fit into the framework of \citet{KrausLM90} (examined for $\qcl$ by \citet{brewka2004qualitative}).
Investigating the computational complexity of these nonmonotonic consequence relations in a general manner may also yield interesting results. Furthermore, the complexity of checking for strong equivalence is still unknown for choice logics that are neither optionality-ignoring nor optionality-differentiating.

\section*{Acknowledgments}

This work was funded by the Austrian Science Fund (FWF) under the grants Y698 and P31890.

\clearpage
\newpage

\bibliographystyle{newapa}
\bibliography{references}

\end{document}